\newtheorem{theorem}{Theorem}
\newtheorem{lemma}[theorem]{Lemma}
\newtheorem{corollary}[theorem]{Corollary}
\newtheorem{question}[theorem]{Question}
\newtheorem{observation}[theorem]{Observation}
\newcommand{\Aa}{\mathcal{A}}
\newcommand{\GG}{\mathcal{G}}
\title{Additive non-approximability of chromatic number in proper minor-closed classes}
\author{Zden\v{e}k Dvo\v{r}\'{a}k\thanks{Charles University, Prague, Czech Republic.
E-mail: {\tt rakdver@iuuk.mff.cuni.cz}.  Supported by project 17-04611S (Ramsey-like aspects of graph
coloring) of Czech Science Foundation.}\and
Ken-ichi Kawarabayashi\thanks{National Institute of Informatics,
2-1-2 Hitotsubashi, Chiyoda-ku, Tokyo 101-8430, Japan. Supported by JST ERATO Grant Number JPMJER1305, Japan.}}
\date{}
\begin{document}
\maketitle

\begin{abstract}
Robin Thomas asked whether for every proper minor-closed class $\GG$, there exists a polynomial-time
algorithm approximating the chromatic number of graphs from $\GG$ up to a constant additive error independent
on the class $\GG$.  We show this is not the case: unless $\text{P}=\text{NP}$, for every integer $k\ge 1$, there is no polynomial-time
algorithm to color a $K_{4k+1}$-minor-free graph $G$ using at most $\chi(G)+k-1$ colors.
More generally, for every $k\ge 1$ and $1\le\beta\le 4/3$, there is no polynomial-time
algorithm to color a $K_{4k+1}$-minor-free graph $G$ using less than $\beta\chi(G)+(4-3\beta)k$
colors.  As far as we know, this is the first non-trivial non-approximability result regarding the chromatic number in proper minor-closed
classes.

We also give somewhat weaker non-approximability bound for $K_{4k+1}$-minor-free graphs with no cliques of size $4$.
On the positive side, we present additive approximation algorithm whose error depends on the apex number of
the forbidden minor, and an algorithm with additive error 6 under the additional assumption that the graph has no $4$-cycles.
\end{abstract}

The problem of determining the chromatic number, or even of just deciding whether a graph is colorable
using a fixed number $c\ge 3$ of colors, is NP-complete~\cite{garey1979computers}, and thus it cannot be
solved in polynomial time unless $\text{P}=\text{NP}$.  Even the approximation version of the problem is hard:
for every $\varepsilon>0$, Zuckerman~\cite{colnonap} proved that unless $\text{P}=\text{NP}$, there exists no polynomial-time algorithm approximating
the chromatic number of an $n$-vertex graph within multiplicative factor $n^{1-\varepsilon}$.

There are more restricted settings in which the graph coloring problem becomes more tractable.
For example, the well-known Four Color Theorem implies that deciding $c$-colorability of a planar graph
is trivial for any $c\ge 4$; still, $3$-colorability of planar graphs is NP-complete~\cite{garey1979computers}.
From the approximation perspective, this implies that chromatic number of planar graphs can be approximated in polynomial time
up to multiplicative factor of $4/3$ (but not better), and additively up to $1$.

More generally, the result of Thomassen~\cite{Thomassen93} on 6-critical graphs implies that the $c$-coloring problem
restricted to graphs that can be drawn in any fixed surface of positive genus is polynomial-time solvable for any $c\ge 5$.
The case $c=3$ includes $3$-colorability of planar graphs and consequently is NP-complete, while the complexity of $4$-colorability
of embedded graphs is unknown for all surfaces of positive genus.  Consequently, chromatic number of graphs embedded in
a fixed surface can be approximated up to multiplicative factor of $5/3$ and additively up to $2$.

If a graph can be drawn in a given surface, all its minors can be drawn there as well.  Hence, it is natural to also consider
the coloring problem in the more general setting of proper minor-closed classes.  Further motivation for this setting comes
from Hadwiger's conjecture, stating that all $K_k$-minor-free graphs are $(k-1)$-colorable.
This conjecture is open for all $k\ge 7$, and not even a polynomial-time algorithm to decide $(k-1)$-colorability
of $K_k$-minor-free graphs is known (Kawarabayashi and Reed~\cite{decihad} designed an algorithm that for a given
input graph $G$ finds a $(k-1)$-coloring, or a minor of $K_k$ in $G$, or finds a counterexample to Hadwiger's conjecture).
However, any $K_k$-minor-free graph is $O(k\sqrt{\log k})$-colorable~\cite{kostomindeg}.  This implies that for every proper
minor-closed class $\GG$, if $k$ is the minimum integer such that $K_k\not\in \GG$, then there exists a constant $c=O(k\sqrt{\log k})$
such that every graph in $\GG$ is $c$-colorable, and thus chromatic number of graphs in $\GG$ can be approximated up to
multiplicative factor $c/3$ and additively up to $c-3$.

On the hardness side, consider for any planar graph $G$ and an integer $t\ge 0$ the graph $G_t$ obtained from $G$ by adding
$t$ universal vertices (adjacent to every other vertex of $G_t$).  Then $\chi(G_t)=\chi(G)+t$, and since $3$-colorability
of planar graphs is NP-complete, there cannot exist a polynomial-time algorithm to decide whether such a graph $G_t$ is $(t+3)$-colorable,
unless $\text{P}=\text{NP}$.  Furthermore, $G_t$ does not contain $K_{t+5}$ as a minor; indeed, each minor of $G_t$ has an induced planar
subgraph containing all but $t$ of its vertices, which is not the case for $K_{t+5}$.  This outlines the importance of another
graph parameter in this context, the \emph{apex number}: we say a graph $H$ is \emph{$t$-apex} if there exists a set $X$ of vertices of $H$ of size
at most $t$ such that $H-X$ is planar, and the apex number of $H$ is the minimum $t$ such that $H$ is $t$-apex. 
The presented construction shows that if the apex number of $H$ is at least $t$, then $(t+2)$-colorability is
NP-complete even when restricted to the class of $H$-minor-free graphs.
On the positive side, Dvo\v{r}\'ak and Thomas~\cite{apex} gave, for any $t$-apex graph $H$ and integer $c\ge t+4$, a polynomial-time algorithm
to decide whether a $(t+3)$-connected $H$-minor-free graph is $c$-colorable (the connectivity assumption is necessary, since
they also proved that for every integer $t\ge 1$, there exists a $t$-apex graph $H$ such that testing $(t+4)$-colorability of
$(t+2)$-connected $H$-minor-free graphs is NP-complete).

In all the mentioned results for proper minor-closed classes, the number of colors needed and thus also the
magnitude of error of the corresponding approximation algorithms depended on the specific class.  This contrasts with the case
of embedded graphs: the multiplicative ($5/3$) and additive ($2$) errors of these approximation algorithms
are independent on the fixed surface in that the graphs are drawn.
Hence, it is natural to ask the following questions.
\begin{question}\label{que-multi}
Does there exist $\beta\ge 1$ with the following property: for every proper minor-closed class $\GG$,
there exists a polynomial-time algorithm taking as an input a graph $G\in\GG$
and returning an integer $c$ such that $\chi(G)\le c\le \beta\chi(G)$?
\end{question}

\begin{question}\label{que-addit}
Does there exist $\alpha\ge 0$ with the following property: for every proper minor-closed class $\GG$,
there exists a polynomial-time algorithm taking as an input a graph $G\in\GG$
and returning an integer $c$ such that $\chi(G)\le c\le \chi(G)+\alpha$?
\end{question}

That is, is it possible to approximate chromatic number up to a multiplicative or additive error independent
on the considered class of graphs $\GG$, as long as $\GG$ is proper minor-closed?  Perhaps a bit surprisingly, the answer to Question~\ref{que-multi}
is positive.  As shown by DeVos et al.~\cite{devospart} and algorithmically by Demaine et al.~\cite{demaine2005algorithmic},
for every proper minor-closed class $\GG$, there exists a constant $\gamma_{\GG}$ such that the vertex set
of any graph $G\in \GG$ can be partitioned in polynomial time into two parts $A$ and $B$ with both
$G[A]$ and $G[B]$ having tree-width at most $\gamma_{\GG}$.  Consequently, $\chi(G[A]),\chi(G[B])\le \chi(G)$ can be determined
exactly in linear time~\cite{coltweasy}, and we can color $G[A]$ and $G[B]$ using disjoint sets of colors, obtaining a coloring of $G$ using
at most $2\chi(G)$ colors.  That is, $\beta=2$ has the property described in Question~\ref{que-multi}.

In the light of this result, Question~\ref{que-addit} may seem more tractable.  Thomas~\cite{robinconj} conjectured
that such a constant $\alpha$ exists, and Kawarabayashi et al.~\cite{kawarabayashi2009additive} conjectured that this is the case
even for list coloring.  As our main result, we disprove these conjectures.
\begin{theorem}\label{thm-nonapprox}
Let $k_0$ be a positive integer, let $F$ be a $(4k_0-3)$-connected graph that is not $(4k_0-4)$-apex,
and let $1\le \beta\le 4/3$ be a real number.
Unless $\text{P}=\text{NP}$, there is no polynomial-time
algorithm taking as an input an $F$-minor-free graph $G$
and returning an integer $c$ such that $\chi(G)\le c<\beta\chi(G)+(4-3\beta)k_0$.
\end{theorem}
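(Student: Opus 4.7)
The plan is to give a polynomial-time reduction from the NP-hard problem of deciding 3-colorability of a planar graph. Given a planar input $G$ with $\chi(G)\in\{3,4\}$ (hard to distinguish), the aim is to construct in polynomial time an $F$-minor-free graph $G'$ of polynomial size with $\chi(G')=k_0\chi(G)\in\{3k_0,4k_0\}$. Given such $G'$, the reduction concludes easily: if an algorithm returns $c$ with $\chi(G')\le c<\beta\chi(G')+(4-3\beta)k_0$, then $c<\beta(3k_0)+(4-3\beta)k_0=4k_0$, i.e., $c\le 4k_0-1$, when $\chi(G')=3k_0$, and $c\ge 4k_0$ when $\chi(G')=4k_0$; the test ``$c\ge 4k_0$?'' thus decides 3-colorability of $G$ in polynomial time, contradicting $\mathrm{P}\ne\mathrm{NP}$.

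The required chromatic gap of $k_0$ comes from additivity of $\chi$ under complete join. A natural candidate construction is $G'=G_1*G_2*\cdots*G_{k_0}$, the complete join of $k_0$ disjoint copies of $G$, which gives $\chi(G')=\sum_i\chi(G_i)=k_0\chi(G)$ since colors on different parts of a complete join must be disjoint. Verifying that $G'$ is $F$-minor-free is the main technical step. The plan is to prove by induction on $k_0$ that any $(4k_0-3)$-connected minor of the complete join of $k_0$ planar graphs must be $(4k_0-4)$-apex, contradicting the assumption on $F$. The base case $k_0=1$ is immediate, as a connected minor of a planar graph is planar. For the inductive step, given branch sets $\{B_v\}_{v\in V(F)}$ of a hypothetical $F$-minor and setting $S=\{v:B_v\cap V(G_{k_0})\ne\emptyset\}$, when $|S|\le 4$ the graph $F-S$ is a $(4(k_0-1)-3)$-connected minor of $G_1*\cdots*G_{k_0-1}$, hence by induction $(4(k_0-1)-4)$-apex, so $F$ itself is $(4k_0-4)$-apex.

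The main obstacle is handling $|S|\ge 5$, which can genuinely arise when the copies $G_i$ are large, since the complete join of large planar graphs can contain highly connected non-apex minors (for instance $K_{3,3,3,3}$ can appear as a minor of the join of two sufficiently large planar graphs). Overcoming this requires refining the construction—for instance, combining the complete join with apex vertices, lexicographic products with cliques, or clique-sums along cutsets of size at most $4k_0-4$, so that the global Hadwiger and apex-type structure of $G'$ remains controlled—or a delicate structural analysis exploiting $(4k_0-3)$-connectivity of $F$ together with the planarity of the subgraph of $F$ induced by branch sets entirely inside a single $V(G_i)$ (which, as a minor of planar $G_i$, is planar) to re-route spanning branch sets and exhibit an apex set of $F$ of size at most $4k_0-4$ directly. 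Establishing this structural claim, in which the high connectivity of $F$ must be leveraged to prevent it from being realized as a minor using many spanning branch sets, is the crux of the proof.
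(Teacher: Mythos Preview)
Your reduction framework---build in polynomial time an $F$-minor-free graph $G'$ with $\chi(G')\le 3k_0$ when the planar input is $3$-colorable and $\chi(G')\ge 4k_0$ otherwise, then observe that the hypothetical approximation algorithm would decide planar $3$-colorability---is correct and is exactly what the paper does. The gap is that your proposed construction does not work, as you effectively admit. The complete join $G_1*\cdots*G_{k_0}$ of $k_0$ planar graphs is \emph{not} $F$-minor-free in general: already for $k_0=2$, the join of two paths $P_n$ contains $K_{n,n}$ as a spanning subgraph and hence $K_n$ as a minor (contract a perfect matching), so for $n\ge 9$ it has $K_9$ as a minor, and $K_9$ is $5$-connected and not $4$-apex. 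Thus the inductive claim you aim for (``every $(4k_0-3)$-connected minor of a join of $k_0$ planar graphs is $(4k_0-4)$-apex'') is simply false, and no rerouting of branch sets can rescue it. Listing candidate fixes (``clique-sums along cutsets of size at most $4k_0-4$'', etc.) without carrying one out is not a proof; designing a construction that simultaneously achieves the chromatic gap \emph{and} the minor-freeness is the entire content of the argument.

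The paper's construction is quite different. It defines a tree-like product $T(G_0,K_{k_0\times 4})$: one takes a rooted $n$-ary tree of depth $4k_0+1$, places a copy of $G_0$ at each internal node, and joins every vertex at level $j$ to its unique ``progenitor'' at level $i$ whenever $u_iu_j$ is an edge of the $4$-blowup $K_{k_0\times 4}$ of $K_{k_0}$. The key structural fact is that this graph is obtained by clique-sums, on cliques of size at most $4k_0-4$, from graphs of the form ``planar plus at most $4k_0-4$ universal vertices''; consequently any $(4k_0-3)$-connected minor lies in a single summand and is $(4k_0-4)$-apex. The price is that the chromatic gap, which was trivial for your join, is now the subtle step: the upper bound $\chi\le k_0\chi(G_0)$ is easy, but the lower bound $\chi\ge 4k_0$ when $\chi(G_0)\ge 4$ requires walking down a root-to-leaf path, at each level choosing a vertex whose color avoids the three most recent progenitors, thereby extracting a proper coloring of the \emph{strong} $4$-blowup of $K_{k_0}$, which needs at least $4\chi_f(K_{k_0})=4k_0$ colors.
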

In particular, in the special case of $\beta=1$ and $F$ being a clique, we obtain the following.

\begin{corollary}\label{cor-nonapprox}
Let $k_0$ be a positive integer.
Unless $\text{P}=\text{NP}$, there is no polynomial-time
algorithm taking as an input a $K_{4k_0+1}$-minor-free graph $G$
and returning an integer $c$ such that $\chi(G)\le c\le \chi(G)+k_0-1$.
\end{corollary}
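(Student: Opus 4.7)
The plan is to obtain Corollary~\ref{cor-nonapprox} as an immediate specialization of Theorem~\ref{thm-nonapprox} with $F = K_{4k_0+1}$ and $\beta = 1$; all substantive content is in the theorem itself, so what remains is to verify the two structural hypotheses on $F$ and to unpack the resulting strict inequality into the integer form stated in the corollary.

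First I would check connectivity. A complete graph on $n$ vertices is $(n-1)$-connected, so $K_{4k_0+1}$ is $4k_0$-connected, which is certainly at least $4k_0 - 3$. Next I would verify that $K_{4k_0+1}$ is not $(4k_0-4)$-apex: deleting any set of $4k_0 - 4$ vertices from $K_{4k_0+1}$ leaves a clique on $(4k_0+1) - (4k_0-4) = 5$ vertices, and $K_5$ is non-planar by Kuratowski's theorem. Thus no vertex set of size $4k_0-4$ witnesses the apex property, so $K_{4k_0+1}$ satisfies both hypotheses of Theorem~\ref{thm-nonapprox}.

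With the hypotheses verified, Theorem~\ref{thm-nonapprox} applied with $\beta=1$ asserts that, unless $\text{P}=\text{NP}$, there is no polynomial-time algorithm returning an integer $c$ with $\chi(G) \le c < \chi(G) + k_0$ for every $K_{4k_0+1}$-minor-free input $G$, since $(4 - 3\beta)k_0$ evaluates to $k_0$ when $\beta = 1$. Because $c$ and $\chi(G)$ are integers, the strict inequality $c < \chi(G) + k_0$ is equivalent to $c \le \chi(G) + k_0 - 1$, giving exactly the stated conclusion. There is no genuine obstacle beyond this bookkeeping; the only arithmetic worth double-checking is $(4k_0+1) - (4k_0-4) = 5$, which ensures the residual clique after removing $4k_0-4$ vertices is precisely the smallest non-planar clique $K_5$, and the sanity check $k_0 = 1$, which recovers the classical non-approximability of $\chi$ on $K_5$-minor-free (in particular, planar) graphs.
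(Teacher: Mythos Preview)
Your proposal is correct and matches the paper's approach exactly: the paper states the corollary as ``the special case of $\beta=1$ and $F$ being a clique'' with no further argument, and your verification that $K_{4k_0+1}$ is $(4k_0-3)$-connected and not $(4k_0-4)$-apex, together with the integer rewriting of the strict inequality, is precisely the routine check this entails.
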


On the positive side, Kawarabayashi et al.~\cite{kawarabayashi2009additive} showed it is possible to approximate chromatic
number of $K_k$-minor free graphs in polynomial time additively up to $k-2$.  We leave open the question whether 
a better additive approximation (of course above the bound $\approx k/4$ given by Corollary~\ref{cor-nonapprox}) is possible.

Another positive result was given by Demaine et al.~\cite{demaine2009approximation}, who proved that if $H$ is a $1$-apex
graph, then the chromatic number of $H$-minor-free graphs can be approximated additively up to $2$.  Let us also
remark that if $H$ is $0$-apex (i.e., planar), then $H$-minor-free graphs have bounded tree-width~\cite{RSey},
and thus their chromatic number can be determined exactly in linear time~\cite{coltweasy}.
We generalize these results to excluded minors with larger apex number (the relevance of the apex number
in the context is already showcased by Theorem~\ref{thm-nonapprox}).

\begin{theorem}\label{thm-alg}
Let $t$ be a positive integer and let $H$ be a $t$-apex graph.
There exists a polynomial-time algorithm taking as an input an $H$-minor-free graph $G$
and returning an integer $c$ such that $\chi(G)\le c\le \chi(G)+t+3$.
\end{theorem}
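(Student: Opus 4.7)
My plan is to reduce the problem to coloring graphs almost embedded in a fixed surface and then reassemble the pieces along the clique-sum structure produced by the structure theorem for apex-minor excluded classes. Apply the algorithmic version of the Robertson--Seymour structure theorem in its form specialized to $t$-apex excluded minors: this produces, in polynomial time, a tree decomposition of $G$ in which every torso $T_i$ admits a distinguished apex set $X_i\subseteq V(T_i)$ with $|X_i|\le t$ such that $T_i-X_i$ is embeddable in a surface of genus at most $g(H)$ with at most $w(H)$ vortices each of depth at most $w(H)$; moreover, $G$ is the clique-sum of these torsos along the adhesion cliques of the decomposition.

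\textbf{Coloring and gluing.} For each torso $T_i$, a polynomial-time extension of Thomassen's $5$-coloring algorithm to almost-embedded graphs (handling the bounded-depth vortices and the small-$\chi$ corner cases where the base Thomassen bound would be wasteful) yields a proper coloring of $T_i-X_i$ using $c_i\le\chi(T_i-X_i)+3$ colors. Extend by $t$ fresh colors used exclusively on $X_i$ to obtain a proper coloring of $T_i$ with $c_i+t\le\chi(T_i)+t+3$ colors, since $\chi(T_i)\ge\chi(T_i-X_i)$. Process the decomposition tree top-down from an arbitrary root: whenever a child torso $T_j$ meets its parent along an adhesion clique $C$, permute $T_j$'s color palette so that its restriction to $C$ agrees with that already fixed by the parent, which is always possible because $C$ is a clique in both torsos and is therefore assigned $|C|$ distinct colors on each side. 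Because clique-sums satisfy $\chi(G)=\max_i\chi(T_i)$, the resulting proper coloring of $G$ uses $c=\max_i(c_i+t)\le\chi(G)+t+3$ colors, establishing $\chi(G)\le c\le\chi(G)+t+3$.

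\textbf{Main obstacle.} The two essential external inputs are (i) an algorithmic structure theorem producing torsos whose apex sets have size at most $t$---the specific feature made possible by the $t$-apex hypothesis on $H$---and (ii) an additive-$3$ chromatic approximation for graphs almost embedded in a fixed surface with bounded-depth vortices, generalizing the additive-$2$ bound for plain bounded-genus graphs noted in the introduction; this second ingredient is the real technical workhorse of the proof. With these at hand, the clique-sum recomposition, the color-permutation bookkeeping, and the verification that no further error accumulates during gluing are routine.
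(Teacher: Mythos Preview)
Your approach has a fundamental gap in the final inequality. You claim that clique-sums satisfy $\chi(G)=\max_i\chi(T_i)$, but this fails in the direction you need: the torsos $T_i$ are \emph{not} subgraphs of $G$, since the torso operation adds virtual clique edges on the adhesions that need not be present in $G$. Hence $\chi(T_i-X_i)$ can exceed $\chi(G)$, and the bound $c=\max_i(c_i+t)\le\chi(G)+t+3$ does not follow. (Already $C_4$, decomposed into two bags of size three, has both torsos equal to $K_3$, so $\max_i\chi(T_i)=3>2=\chi(C_4)$.) Replacing $T_i$ by the honest induced subgraph $G[\beta(v_i)]$ would repair this inequality but break your gluing, since the adhesions are then no longer cliques and color permutations need not exist. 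There is also a structural issue: the structure theorem does not give apex sets of size at most $t$; it gives $|A_v|\le a_H$ for a constant depending on $H$, and the $t$-apex hypothesis only guarantees (Theorem~\ref{thm-struct}) that the subset $A'_v$ of apex vertices with neighbors in the surface part satisfies $|A'_v|\le t-1$. Since $\beta\uparrow v\subseteq A_v$ but not necessarily $A'_v$, you cannot arrange both $|X_i|\le t$ and the adhesion inside $X_i$.

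The paper sidesteps these obstacles by not approximating $\chi$ on torsos at all. It partitions $V(G)$ into two sets $L$ and $C$ (Theorem~\ref{thm-decomp}): the set $L$ absorbs the vortices, most of the apex vertices, and a genus-cutting subset of each surface part, so that $G[L]$ has bounded tree-width and $\chi(G[L])\le\chi(G)$ can be computed exactly; the remainder $G[C]$ admits a $(t,t)$-restricted tree decomposition (planar bags with at most $t-1$ already-colored vertices above each), whence $\chi(G[C])\le t+3$ absolutely by Lemma~\ref{lemma-color} and the Four Color Theorem. Coloring $L$ and $C$ with disjoint palettes yields at most $\chi(G)+t+3$ colors, with no gluing across virtual cliques and no per-piece chromatic-number estimate required.
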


The construction we use to establish Theorem~\ref{thm-nonapprox} results in graphs with large clique number (on the order of $k_0$).
On the other hand, forbidding triangles makes the coloring problem for embedded graphs more tractable---all planar graphs are $3$-colorable~\cite{grotzsch1959}
and there exists a linear-time algorithm to decide $3$-colorability of a graph embedded in any fixed surface~\cite{trfree7}.
It is natural to ask whether Question~\ref{que-addit} could not have a positive answer for triangle-free graphs, and this question
is still open.  On the negative side, we show that forbidding cliques of size $4$ is not sufficient.

\begin{theorem}\label{thm-k4free}
Let $\beta$ and $d$ be real numbers such that $1\le \beta<4/3$ and $d\ge 0$.  Let $m=\lceil d/(4-3\beta)\rceil$.
There exists a positive integer
$k_0=O\bigl(m^4\log ^2 m\bigr)$ such that the following holds.  Let $F$ be a $(4k_0-1)$-connected graph with at least $4k_0+8$ vertices
that is not $(4k_0-4)$-apex.  Unless $\text{P}=\text{NP}$, there is no polynomial-time
algorithm taking as an input an $F$-minor-free graph $G$ with $\omega(G)\le 3$
and returning an integer $c$ such that $\chi(G)\le c<\beta\chi(G)+d$.
\end{theorem}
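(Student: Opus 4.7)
The proof of Theorem~\ref{thm-k4free} will follow the template of Theorem~\ref{thm-nonapprox}: reduce an NP-hard coloring problem to approximating the chromatic number of an $F$-minor-free graph, producing instances whose chromatic number has a large gap between the YES and NO cases. The construction used for Theorem~\ref{thm-nonapprox} yields graphs of clique number $\Theta(k_0)$, so to enforce $\omega(G)\le 3$ we must replace the clique-based chromatic-number gadgets by substitutes that are themselves $K_4$-free, without destroying either the chromatic gap or the $F$-minor-freeness.

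The key ingredient will be a $K_4$-free graph of large chromatic number: for every integer $m\ge 1$ there exists a $K_4$-free graph $H_m$ with $\chi(H_m)\ge m$ on $O(m^2\log m)$ vertices, via Erd\H{o}s's probabilistic construction or the explicit construction of Ajtai, Koml\'os and Szemer\'edi; this is tight up to constants since any $K_4$-free graph on $n$ vertices has independence number $\Omega(\sqrt{n\log n})$. The plan is to start from a triangle-free base graph whose $3$-colorability is NP-hard (obtainable from 3-SAT via the standard triangle-free reduction) and to substitute $H_m$ for the clique gadgets in the construction of Theorem~\ref{thm-nonapprox}, attaching $H_m$ to the base through a carefully chosen sparse interface---an independent-set boundary, a matching, or a subdivided join---so that no $K_4$ is created across the attachment.

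Tracking the apex number through this substitution is where the bound $k_0=O(m^4\log^2 m)$ enters. Each clique of chromatic number $\Theta(m)$ in the original construction is replaced by a copy of $H_m$ on $O(m^2\log m)$ vertices, inflating the apex budget by that factor; moreover, the sparse interface needed to avoid $K_4$'s across the attachment must itself be absorbed into the apex set, producing a second factor of $|V(H_m)|$, for a total apex number of $O(|V(H_m)|^2)=O(m^4\log^2 m)$. The strengthened hypotheses on $F$---$(4k_0-1)$-connectivity and at least $4k_0+8$ vertices---will be used to push the minor-freeness argument of Theorem~\ref{thm-nonapprox} through the enlarged construction. The main obstacle will be precisely this tension between $K_4$-avoidance and apex control: a naive join of $H_m$ with the triangle-free host already creates $K_4$'s, so $H_m$ must be attached more sparsely, and the most delicate step will be to verify that, after this sparsification, the chromatic number in the NO case still exceeds that in the YES case by $\Omega(m)$; intuitively the colorings of $H_m$ still see the chromatic obstruction of the host through the sparsified interface, but establishing this rigorously will require a separate combinatorial lemma tailored to the shape of the interface.
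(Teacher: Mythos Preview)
Your plan misidentifies both the key obstacle and the source of the bound $k_0=O(m^4\log^2 m)$. The paper does \emph{not} modify the tree-like product $T(G_0,H)$ with ``sparse interfaces''; it reuses the construction verbatim. What changes is only the choice of inputs: $G_0$ is taken to be a \emph{triangle-free} graph whose $3$-colorability is NP-hard and whose $3$-connected minors on at least $12$ vertices are planar (Lemma~\ref{lemma-npcnotria}, obtained by Haj\'os sums with the Gr\"otzsch graph), and $H$ is the $4$-blowup of a \emph{triangle-free} graph $H_m$. With both $G_0$ and $H$ triangle-free, Lemma~\ref{lemma-propgen} gives $\omega(T(G_0,H))\le \omega(G_0)+\omega(H)-1\le 3$ directly, with no interface engineering and no $K_4$-avoidance difficulties. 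Your proposed $K_4$-free $H_m$ would not suffice here; triangle-freeness of $H_m$ is what is needed.

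The genuine missing idea in your proposal is the condition on $H_m$ that makes the chromatic lower bound in Lemma~\ref{lemma-propcolor} go through: one needs $\chi_f(H_m)=\chi(H_m)=m$, not merely $\chi(H_m)\ge m$. The lower bound $\chi(T(G_0,H))\ge cp$ in Lemma~\ref{lemma-propcolor} passes through the strong blowup and Observation~\ref{obs-sbl}, and collapses without the fractional hypothesis. Constructing a triangle-free graph with $\chi_f=\chi=m$ is a separate probabilistic lemma (Lemma~\ref{lemma-constr}), and it is the number of vertices of \emph{that} graph---$|V(H_m)|=O(m^4\log^2 m)$---that becomes $k_0$; the bound has nothing to do with squaring an interface size. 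Finally, the strengthened hypotheses on $F$ ($(4k_0-1)$-connected, $\ge 4k_0+8$ vertices) are used because after deleting $\le 4k_0-4$ vertices the residual minor is only $3$-connected, and one must then invoke the structural property of $G_0\in\GG$ that its $3$-connected minors on $\ge 12$ vertices are planar.
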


In particular, in the special case of $\beta=1$ and $F$ being a complete graph, we get the following.

\begin{corollary}
For every positive integer $k_0$, there exists an integer $d=\Omega(k_0^{1/4}/\log^{1/2}k_0)$ as follows.
Unless $\text{P}=\text{NP}$, there is no polynomial-time
algorithm taking as an input a $K_{4k_0+8}$-minor-free graph $G$ with $\omega(G)\le 3$
and returning an integer $c$ such that $\chi(G)\le c\le \chi(G)+d$.
\end{corollary}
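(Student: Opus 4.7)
The plan is to derive the corollary as a direct specialization of Theorem~\ref{thm-k4free}. Setting $\beta=1$ in Theorem~\ref{thm-k4free} gives $4-3\beta=1$, so for an integer $d$ the parameter $m=\lceil d/(4-3\beta)\rceil$ reduces to $d$, and the theorem supplies a positive integer $k'=O(d^4\log^2 d)$ (I write $k'$ to distinguish the theorem's internal parameter from the corollary's $k_0$) such that the hardness statement holds for any graph $F$ which is $(4k'-1)$-connected, has at least $4k'+8$ vertices, and is not $(4k'-4)$-apex. The goal is to pick $d$ as large as possible so that $F=K_{4k_0+8}$ meets these three requirements.

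For this it suffices to arrange $k'\le k_0$. Indeed, $K_{4k_0+8}$ is $(4k_0+7)$-connected and hence $(4k'-1)$-connected, has $4k_0+8\ge 4k'+8$ vertices, and is not $(4k_0+3)$-apex: removing any at most $4k_0+3$ vertices leaves a complete graph on at least $5$ vertices, which is non-planar. In particular it is not $(4k'-4)$-apex. Thus the whole corollary reduces to inverting $k'=O(d^4\log^2 d)$ to express $d$ in terms of $k_0$. If I set $d$ of order $k_0^{1/4}/\log^{1/2}k_0$, then $d^4$ is of order $k_0/\log^2 k_0$ and $\log d$ is of order $\log k_0$, so $d^4\log^2 d$ is of order $k_0$, and for a sufficiently small implicit constant the bound $k'\le k_0$ is secured. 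This yields $d=\Omega\bigl(k_0^{1/4}/\log^{1/2}k_0\bigr)$, and applying Theorem~\ref{thm-k4free} with this $d$ (shifted by $1$ if needed to turn the strict inequality $c<\chi(G)+d$ into $c\le\chi(G)+d$ for integer $c$, at the cost of only a constant factor absorbed by the $\Omega$) delivers the corollary.

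No genuine difficulty arises; the corollary is essentially a parameter-matching exercise once Theorem~\ref{thm-k4free} is in hand. The only points worth checking carefully are the direction of the inequality between $k'$ and $k_0$, so that shrinking $k'$ below $k_0$ makes the three hypotheses on $F=K_{4k_0+8}$ easier rather than harder to satisfy, and the off-by-one between the strict and non-strict inequalities in the two statements.
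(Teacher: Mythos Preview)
Your proposal is correct and matches the paper's approach exactly: the paper simply states that the corollary is the special case $\beta=1$, $F$ a complete graph, of Theorem~\ref{thm-k4free}, and gives no further argument. Your careful checks that $k'\le k_0$ suffices for $K_{4k_0+8}$ to meet all three hypotheses on $F$, together with the inversion of $k'=O(d^4\log^2 d)$ and the handling of the strict-versus-nonstrict inequality, fill in precisely the details the paper leaves implicit.
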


On the positive side, we offer the following small improvement to the additive error of Theorem~\ref{thm-alg}.

\begin{theorem}\label{thm-algnotri}
Let $t$ be a positive integer and let $H$ be a $t$-apex graph.
There exists a polynomial-time algorithm taking as an input an $H$-minor-free graph $G$ with no triangles
and returning an integer $c$ such that $\chi(G)\le c\le \chi(G)+\lceil (13t+172)/14\rceil$.
\end{theorem}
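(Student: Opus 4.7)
The plan is to adapt the framework of Theorem~\ref{thm-alg} by replacing its ``Four-Color-Theorem'' step with a Gr\"otzsch-type coloring of the triangle-free core, and then to extract an additional linear saving on the apex contribution that lowers the coefficient of $t$ from $1$ to $13/14$.

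First, I would apply the algorithmic Robertson--Seymour structure theorem (in the style of Grohe--Kawarabayashi--Reed) to compute, in polynomial time, a tree decomposition of $G$ whose torsos are almost-embeddable in a bounded-genus surface, with bounded vortex width and an apex set of size bounded by a function of $t$ (exploiting that $H$ is $t$-apex). Standard clique-sum arguments let us reduce the coloring problem to coloring a single almost-embedded torso $T$ with apex set $A$ of size $|A|\le t+O(1)$. Since $G$ is triangle-free, so is $T-A$; using the finite list of $4$-critical triangle-free graphs drawable in a fixed surface with bounded-width vortices (and the corresponding polynomial-time $3$-colorability algorithm extending Gr\"otzsch's theorem), one can color $T-A$ with at most $3+c_0$ colors, where $c_0$ depends only on the genus and vortex parameters. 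This constant, together with the ceiling rounding, is responsible for the constant $\lceil 172/14\rceil$ in the stated bound.

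The substantive new step is to show that at least $\lceil |A|/14\rceil$ of the apex vertices can be absorbed into the three color classes used for $T-A$, rather than all of them requiring fresh colors. The required combinatorial lemma should assert that in a triangle-free almost-embeddable graph the bipartite adjacency between $A$ and a fixed $3$-coloring of $T-A$ is sparse enough---because apex neighborhoods are independent sets subject to Euler-type density restrictions on the surface---that one can find an independent subset of $A$ of linear size each of whose vertices has a color class of $T-A$ disjoint from its neighborhood, and whose members are pairwise non-adjacent. Algorithmically, the absorption can then be implemented by a greedy or matching-based extraction. The main obstacle is pinning down the exact denominator $14$: it likely requires an extremal analysis weighing the Euler-formula density bound for triangle-free embeddings against the distribution of apex neighborhoods across the three color classes, calibrated so that a $1/14$-fraction of $A$ is always absorbable. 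Combining the absorbed coloring with fresh colors for the remaining $\lceil 13|A|/14\rceil$ apex vertices yields a proper coloring of $T$ (and hence of $G$) with at most $\chi(G)+\lceil(13t+172)/14\rceil$ colors, as required.
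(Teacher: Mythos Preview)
Your proposal has two genuine gaps.

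First, the reduction you sketch does not preserve triangle-freeness. Passing to torsos of a tree decomposition adds the clique edges on $\beta\uparrow v$, so even though $G$ is triangle-free, a torso $T$ need not be, and in particular $T-A$ need not be triangle-free. Consequently you cannot invoke Gr\"otzsch-type results to colour $T-A$ with $3+c_0$ colours. The paper avoids this issue entirely: in Theorem~\ref{thm-decomp} it splits $V(G)$ into $L$ and $C$, colours the bounded-tree-width part $G[L]$ optimally, and then needs only to bound $\chi(G[C])$; the triangle-free hypothesis is used for $G$ itself (inside the proof of Lemma~\ref{lemma-color}), never for a torso.

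Second, your ``absorption lemma'' is not a proof but a hope: you say it ``should assert'' a certain sparseness and that pinning down the denominator $14$ ``likely requires'' an extremal analysis you do not carry out. The paper's mechanism for the constant $13/14$ is completely different and does not involve absorbing apex vertices into a pre-computed $3$-colouring at all. In Lemma~\ref{lemma-color} the graph $G[C]$ is coloured greedily along its $(t,t)$-restricted decomposition with $c=\lceil(13t+172)/14\rceil$ colours while maintaining the invariant~($\star$): on every independent set that meets $\beta\downarrow w$ in a torso-clique, at most $c-6$ colours appear. Triangle-freeness enters because the already-coloured neighbours of the current vertex $x$ lying in $\beta\uparrow v$ form an independent set, so~($\star$) caps the number of colours on them; a counting argument (the set $Z$, the map $f$ with fibres of size at most $6$, and the inequality $t+4\ge |Z|+|f(Z)|\ge \tfrac{7}{6}(2c-t-21)$) then forces the contradiction $14c\le 13t+171$. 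The constants $172$ and $14$ come from this chain of inequalities, not from genus or vortex parameters.
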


What about graphs of larger girth?  It turns out that Question~\ref{que-addit} has positive answer for graphs of
girth at least $5$, with $\alpha=6$.  Somewhat surprisingly, it is not even necessary to forbid triangles to obtain
this result, just forbidden 4-cycles are sufficient.  Indeed, we can show the following stronger result.

\begin{theorem}\label{thm-algnobip}
Let $a\le b$ be positive integers and let $\GG$ be a proper minor-closed class of graphs.
There exists a polynomial-time algorithm taking as an input a graph $G\in\GG$ not containing $K_{a,b}$ as a subgraph
and returning an integer $c$ such that $\chi(G)\le c \le\chi(G)+a+4$.
\end{theorem}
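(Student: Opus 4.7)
The plan is to combine the two-part vertex partition available in any proper minor-closed class with a list-coloring extension, using the $K_{a,b}$-subgraph-free hypothesis to control the interface between the parts. To start, apply the algorithmic result of Demaine et al.~\cite{demaine2005algorithmic} (building on \cite{devospart}) to compute in polynomial time a partition $V(G)=A\cup B$ such that both $G[A]$ and $G[B]$ have tree-width at most a constant $\gamma_{\GG}$ depending only on $\GG$. Then, using the linear-time algorithm of \cite{coltweasy}, compute $k=\chi(G[A])$ exactly together with an optimal coloring $\phi_A\colon A\to\{1,\dots,k\}$. Since $k\le\chi(G)$, it suffices to extend $\phi_A$ to a proper coloring of $G$ using at most $a+4$ additional colors; this would give a coloring of $G$ with $k+a+4\le\chi(G)+a+4$ colors, as required.

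The extension would be phrased as a list-coloring problem on $G[B]$: each $v\in B$ receives the list $L(v)=(\{1,\dots,k\}\setminus\phi_A(N_G(v)\cap A))\cup S$, where $S$ is a fixed set of $a+4$ fresh colors. Because $G[B]$ has bounded tree-width and the total palette $\{1,\dots,k\}\cup S$ has bounded size (recall that $\chi(G)$ is bounded by a constant on $\GG$ via \cite{kostomindeg}), deciding $L$-list-colorability of $G[B]$, and constructing such a coloring when one exists, can be done in polynomial time by standard dynamic programming along a tree decomposition of $G[B]$.

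The main obstacle is to prove that the $L$-list-coloring of $G[B]$ is always feasible. A vertex $v\in B$ whose neighborhood in $A$ meets many $\phi_A$-classes may have its list dominated by $S$ alone, and $|S|=a+4$ need not exceed the list chromatic number of $G[B]$. The $K_{a,b}$-subgraph-free hypothesis is exactly the tool that controls this: it forbids any $a$ vertices of $B$ from having $b$ common neighbors in $A$, which severely restricts the bipartite graph between $A$ and $B$. The crux of the proof would be to choose $\phi_A$ carefully among the many optimal colorings of the bounded-tree-width graph $G[A]$ so that $|\phi_A(N_G(v)\cap A)|$ is small for every $v\in B$, with any residual ``defect'' absorbed using the fresh palette $S$ together with the bounded degeneracy of $G[B]$. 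I expect this balancing step---most likely realized through an exchange argument along a tree decomposition of $G[A]$, or a probabilistic construction derandomized using the bounded-width structure---to be the technical heart of the proof, and to be precisely what produces the $+4$ slack in the final bound.
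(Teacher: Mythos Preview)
Your proposal has a genuine gap at exactly the place you yourself flag: you have not established that the list-coloring of $G[B]$ is feasible, and I do not see how to fill it along the lines you suggest. The $K_{a,b}$-free hypothesis controls how many vertices of $B$ can share $a$ common neighbours in $A$; it says nothing about the number of $\phi_A$-colour classes a \emph{single} vertex $v\in B$ meets. A vertex $v\in B$ may have arbitrarily many neighbours in $A$ (proper minor-closed classes have bounded degeneracy, not bounded maximum degree), and there is no mechanism, for any choice of optimal $\phi_A$, preventing those neighbours from using all $k$ colours. Then $L(v)=S$ has size $a+4$, while the list chromatic number of $G[B]$ is only bounded by $\gamma_{\GG}+1$, which can be far larger than $a+4$. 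Neither the ``exchange argument along a tree decomposition of $G[A]$'' nor the degeneracy of $G[B]$ addresses this: the degeneracy bound you would need is $a+3$, and you only have $\gamma_{\GG}$.

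The paper's proof proceeds quite differently. It does not use the two-parts-of-bounded-tree-width partition of Demaine et al.\ at all. Instead it invokes the Robertson--Seymour structure theorem in the strengthened form of Theorem~\ref{thm-struct} (with the Dvo\v{r}\'ak--Thomas control on how apex vertices attach), and extracts from it a partition $V(G)=L\cup C$ where $G[L]$ has bounded tree-width and $G[C]$ admits a \emph{$(t,a)$-restricted} rooted tree decomposition: each bag, after removing the interface with its parent, induces a \emph{planar} graph in the torso expansion, and every vertex in the bag has at most $a-1$ neighbours in that interface. The $K_{a,b}$-freeness enters in Theorem~\ref{thm-decomp}: in each bag, the vertices of the surface part that see $\ge a$ apex vertices number at most $\binom{t-1}{a}(b-1)$, and these finitely many vertices are simply moved into $L$. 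Finally Lemma~\ref{lemma-color} colours $G[C]$ with $a+4$ colours by processing bags top-down and applying Thomassen's $5$-choosability theorem to each planar piece, since each vertex has at most $a-1$ colours forbidden by the parent interface. The ``$+4$'' thus comes from $5$-choosability of planar graphs, not from any property of the bounded-tree-width part.
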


Let us remark that the multiplicative $2$-approximation algorithm of Demaine et al.~\cite{demaine2005algorithmic} can be combined with
the algorithms of Theorems~\ref{thm-alg}, \ref{thm-algnotri}, and \ref{thm-algnobip} by returning the minimum of their results.
E.g., if $H$ is a $t$-apex graph, then there is a polynomial-time algorithm coloring an $H$-minor-free graph $G$ using
at most $\min(2\chi(G),\chi(G)+t+3)\le \beta\chi(G)+(2-\beta)(t+3)$ colors, for any $\beta$ such that $1\le \beta\le 2$;
the combined multiplicative-additive non-approximability bounds of Theorems~\ref{thm-nonapprox} and \ref{thm-k4free} are also
of interest in this context.

In Section~\ref{sec-constr}, we present a graph construction which we exploit to obtain the non-approximability results
in Section~\ref{sec-nonapprox}.  The approximation algorithms are presented in Section~\ref{sec-approx}.

\section{Tree-like product of graphs}\label{sec-constr}

Let $G$ and $H$ be graphs, and let $|V(G)|=n$ and $V(H)=\{u_1,\ldots, u_k\}$.
Let $T_{n,k}$ denote the rooted tree of depth $k+1$ such that each vertex at depth at most $k$ has
precisely $n$ children (the \emph{depth} of the tree is the number of vertices of a longest path
starting with its root, and the depth of a vertex $x$ is the number of vertices of the path from the
root to $x$; i.e., the root has depth $1$).  For each non-leaf vertex $x\in V(T_{n,k})$, let $G_x$ be a distinct copy of the graph $G$
and let $\theta_x$ be a bijection from $V(G_x)$ to the children of $x$ in $T_{n,k}$.  If $v\in V(G_x)$, $y$ is a non-leaf
vertex of the subtree of $T_{n,k}$ rooted in $\theta_x(v)$, and $z\in V(G_y)$, then we say that $v$ is a \emph{progenitor} of $z$.
The \emph{level} of $v$ is defined to be the depth of $x$ in $T_{n,k}$.  Note that a vertex at level $j$ has exactly one
progenitor at level $i$ for all positive $i<j$.
The graph $T(G,H)$ is obtained from the disjoint union of the graphs $G_x$ for non-leaf vertices $x\in V(T_{n,k})$
by, for each edge $u_iu_j\in E(H)$ with $i<j$, adding all edges from vertices of $T(G,H)$ at level $j$ to their
progenitors at level $i$.
Note that the graph $T(G,H)$ depends on the ordering of the vertices of $H$, which we consider to be fixed arbitrarily.

\begin{lemma}\label{lemma-propgen}
Let $G$ and $H$ be graphs with $|V(G)|=n\ge 2$ and $V(H)=\{u_1,\ldots, u_k\}$.  Let $q\le k$ be the maximum integer
such that $\{u_{k-q+1},\ldots,u_k\}$ is an independent set in $H$.  The graph $T(G,H)$ has $O(n^k)$ vertices
and $\omega(T(G,H))=\omega(G)+\omega(H)-1$.  Furthermore, if $F$ is a minor of $T(G,H)$ and $F$ is $(k-q+1)$-connected,
then there exists a set $X\subseteq V(F)$ of size at most $k-q$ such that $F-X$ is a minor of $G$.
\end{lemma}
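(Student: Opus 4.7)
The plan is to prove the three parts separately. The size bound and clique-number equality are direct structural computations on $T(G,H)$, while the minor property is handled by induction on $k$ via a one-vertex-cut argument that exploits the $(k-q+1)$-connectivity.

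For the size, depth $i$ of $T_{n,k}$ has $n^{i-1}$ nodes and only depths $1,\ldots,k$ carry a $G$-copy, giving a total of $n+n^2+\cdots+n^k=O(n^k)$. For the clique number, the upper bound follows because $T(G,H)$ has no edges between $G_x$ and $G_y$ for incomparable tree nodes $x,y$: any clique $K$ must lie in a chain of tree nodes, and the corresponding levels form a clique of $H$. The contribution of each tree node strictly above the deepest used one is forced to be a single vertex---namely the common progenitor of any vertex of $K$ at the deepest level---while the deepest tree node contributes at most an $\omega(G)$-clique, so $|K|\le\omega(G)+\omega(H)-1$. The matching lower bound is realized by picking an $\omega(H)$-clique $u_{i_1}<\cdots<u_{i_s}$ of $H$, a tree node $y$ at depth $i_s$, an $\omega(G)$-clique $C\subseteq G_y$, and the progenitor at each level $i_1,\ldots,i_{s-1}$ of an arbitrary fixed vertex of $C$.

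For the minor claim I induct on $k$. The base case $q=k$ (so $H$ is edgeless and $k-q=0$) is immediate, since $T(G,H)$ is a disjoint union of copies of $G$ and any connected minor lies in one copy. For the inductive step, $q<k$ and $k-q+1\ge 2$; setting $H'=H-u_1$, one checks that $H'$ has the same last-independent-set size $q$, so the inductive hypothesis for $T(G,H')$ applies at parameter $(k-1)-q+1=k-q$. For each $w\in V(G_{\text{root}})$, the subtree $T_w$ of $T(G,H)-G_{\text{root}}$ rooted at $\theta_{\text{root}}(w)$ is a copy of $T(G,H')$ and is attached to the remainder of $T(G,H)$ only through $w$. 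Fix a minor model $\{B_v\}_{v\in V(F)}$ of $F$ and set $V^*_w=\{v:B_v\subseteq T_w\}$, letting $v_w$ denote the unique vertex of $F$ with $w\in B_{v_w}$ when $w$ is used (and otherwise treating $\{v_w\}$ as empty). Any would-be $F$-edge from $V^*_w$ to $V(F)\setminus(V^*_w\cup\{v_w\})$ would require a branch set other than $B_{v_w}$ to contain $w$, which is impossible; hence $\{v_w\}$ is a cut in $F$, and $(k-q+1)$-connectivity with $k-q+1\ge 2$ forces, for every $w$, either $V^*_w=\emptyset$ or $V(F)\setminus(V^*_w\cup\{v_w\})=\emptyset$.

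This leaves two cases. If some $w$ satisfies $V^*_w\ne\emptyset$, then after discarding $v_w$ (or nothing, when $w$ is unused) the rest of $F$ lives entirely in $T_w\cong T(G,H')$ and is at least $(k-q)$-connected; the induction hypothesis supplies $X'$ of size at most $k-q-1$ with the further deletion a minor of $G$, so $X=\{v_w\}\cup X'$ (or $X=X'$) has size at most $k-q$. Otherwise $V^*_w=\emptyset$ for every $w$; then every $B_v$ meets $V(G_{\text{root}})$, since a $B_v$ disjoint from $G_{\text{root}}$ would, being connected, lie in some $T_w$. I claim $W_v:=B_v\cap V(G_{\text{root}})$ then directly furnishes a minor model of $F$ in $G=G_{\text{root}}$, so $X=\emptyset$ suffices. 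Each $W_v$ is connected in $G_{\text{root}}$ because every excursion of $B_v$ into a subtree $T_w$ enters and leaves via the unique gateway vertex $w\in W_v$ and can be shortcut there. The main obstacle is verifying that the edges of $F$ are correctly realized in this model: I plan to trace each $T(G,H)$-edge between distinct $B_{v_1},B_{v_2}$ and argue that any endpoint lying in some $T_w$ would force both branch sets to contain the gateway $w$, contradicting disjointness; the only surviving case places both endpoints in $G_{\text{root}}$, yielding a genuine $G$-edge between $W_{v_1}$ and $W_{v_2}$.
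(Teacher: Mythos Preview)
Your proof is correct, but the argument for the minor claim follows a genuinely different route from the paper's.

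The paper dispatches the minor statement in one structural observation: $T(G,H)$ admits a tree decomposition (indexed by the non-leaf nodes of $T_{n,k}$) whose adhesion sets consist of progenitors at levels at most $k-q$, hence have size at most $k-q$, and whose torsos are subgraphs of $G_{k-q}$ (the graph $G$ with $k-q$ universal vertices adjoined). The standard fact that a $(c{+}1)$-connected minor of a clique-sum along cliques of size at most $c$ must already be a minor of one of the pieces then gives the conclusion in a single step.

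Your approach instead peels off one level at a time by induction on $k$, using that each subtree $T_w$ hangs off $G_{\text{root}}$ through the single vertex $w$. Case~1 (some $V^*_w\ne\emptyset$) reduces to $T(G,H')$ and the inductive hypothesis at connectivity $k-q$; Case~2 (all $V^*_w=\emptyset$) contracts the model down to $G_{\text{root}}$ directly. Your edge-realization step in Case~2 is sound: if the witnessing $T(G,H)$-edge has an endpoint in some $T_w$, then the branch set containing that endpoint must also contain $w$ (since it is connected, not contained in $T_w$, and $w$ is the unique gateway), while the other endpoint is forced to be either in $T_w$ or equal to $w$, in either case placing $w$ in the second branch set as well --- contradicting disjointness. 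So both endpoints lie in $G_{\text{root}}$.

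In terms of trade-offs: the paper's argument is shorter and more conceptual, but it leans on the (standard yet nontrivial) lemma about highly connected minors in clique-sums. Your inductive approach is longer but fully self-contained and elementary, never invoking that lemma. Both give the same bound $|X|\le k-q$.
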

\begin{proof}
The tree $T_{n,k}$ has $1+n+n^2+\ldots+n^{k-1}\le 2n^{k-1}$ non-leaf vertices, and thus $|V(T(G,H))|\le 2n^k$.

Consider a clique $K$ in $T(G,H)$, and let $v$ be a vertex of $K$ of largest level.  Let $x$ be the vertex of $T_{n,k}$
such that $v\in V(G_x)$.  Note that all vertices of $K\setminus V(G_x)$ are progenitors of $v$, and the vertices of $H$
corresponding to their levels are pairwise adjacent.  Consequently, $|K\setminus V(G_x)|\le \omega(H)-1$
and $|K\cap V(G_x)|\le \omega(G)$.  Therefore, each clique in $T(G,H)$ has size at most $\omega(G)+\omega(H)-1$.
A converse argument shows that cliques $K_G$ in $G$ and $K_H$ in $H$
give rise to a clique in $T(G,H)$ of size $|K_G|+|K_H|-1$, implying that $\omega(T(G,H))=\omega(G)+\omega(H)-1$.

For $i=0,\ldots, k-q$, let $G_i$ denote the graph obtained from $G$ by adding $i$ universal vertices.
Observe that $T(G,H)$ is obtained from copies of $G_0$, \ldots, $G_{k-q}$ by clique-sums on cliques of size
at most $k-q$ (consisting of the progenitors whose level is most $k-q$).  Hence, each $(k-q+1)$-connected minor $F$
of $T(G,H)$ is a minor of one of $G_0$, \ldots, $G_{k-q}$,
and thus a minor of $G$ can be obtained from $F$ by removing at most $k-q$ vertices.
\end{proof}

For an integer $p\ge 1$, the \emph{$p$-blowup} of a graph $H_0$ is the graph $H$ obtained from $H_0$ by replacing every
vertex $u$ by an independent set $S_u$ of $p$ vertices, and by adding all edges $zz'$ such that $z\in S_u$ and $z'\in S_{u'}$
for some $uu'\in E(H_0)$.  For the purposes of constructing the graph $T(G,H)$, we order the vertices of $H$ so that
for each $u\in V(H_0)$, the vertices of $S_u$ are consecutive in the ordering.
The \emph{strong $p$-blowup} is obtained from the $p$-blowup by making the sets $S_u$ into cliques
for each $u\in V(H_0)$.  For integers $a\ge b\ge 1$, an \emph{$(a:b)$-coloring} of $H_0$ is a function $\varphi$ that to each
vertex of $H_0$ assigns a subset of $\{1,\ldots, a\}$ of size $b$ such that $\varphi(u)\cap\varphi(v)=\emptyset$ for each edge
$uv$ of $H_0$.  The \emph{fractional chromatic number} $\chi_f(H_0)$ is the infimum of $\{a/b:\text{$H_0$ has an $(a:b)$-coloring}\}$.
Note that if $H$ is the strong $p$-blowup of a graph $H_0$, then a $c$-coloring of $H$ gives a $(c:p)$-coloring of $H_0$.
Consequently, we have the following.
\begin{observation}\label{obs-sbl}
If $H$ is the strong $p$-blowup of a graph $H_0$, then $\chi(H)\ge p\chi_f(H_0)$.
\end{observation}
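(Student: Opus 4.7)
The plan is essentially to repackage the remark that the author makes in the sentence immediately preceding the observation, and then invoke the definition of the fractional chromatic number. There is no real obstacle; this is an unfolding of definitions, and the only thing that needs verification is that the ``obvious'' map from colorings of $H$ to fractional colorings of $H_0$ works as advertised.

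More concretely, I would fix a proper coloring $\psi\colon V(H)\to\{1,\dots,\chi(H)\}$ witnessing $\chi(H)$. For each $u\in V(H_0)$, the set $S_u$ is, by the definition of the strong $p$-blowup, a clique of size $p$ in $H$, so $\psi$ must use $p$ pairwise distinct colors on $S_u$. Define $\varphi(u):=\psi(S_u)\subseteq\{1,\dots,\chi(H)\}$; this is a set of size exactly $p$. For any edge $uu'\in E(H_0)$, the strong blowup contains all edges between $S_u$ and $S_{u'}$, so no color can appear on both sides, giving $\varphi(u)\cap\varphi(u')=\emptyset$. Thus $\varphi$ is a $(\chi(H):p)$-coloring of $H_0$.

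By the definition of the fractional chromatic number as the infimum of $a/b$ over all $(a:b)$-colorings, we obtain $\chi_f(H_0)\le \chi(H)/p$, which rearranges to the desired inequality $\chi(H)\ge p\chi_f(H_0)$. The main ``step'' is simply recognizing that cliqueness of each $S_u$ forces a set-valued interpretation of $\psi$, after which the complete bipartite joins between $S_u$ and $S_{u'}$ (for $uu'\in E(H_0)$) give the disjointness required for a fractional coloring.
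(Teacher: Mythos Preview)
Your proof is correct and follows exactly the approach indicated in the paper: a proper $\chi(H)$-coloring of the strong $p$-blowup yields a $(\chi(H):p)$-coloring of $H_0$ (since each $S_u$ is a clique and adjacent $S_u,S_{u'}$ are completely joined), whence $\chi_f(H_0)\le \chi(H)/p$. You have simply spelled out the details of the one-sentence justification preceding the observation.
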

We now state a key result concerning the chromatic number of the graph $T(G,H)$.

\begin{lemma}\label{lemma-propcolor}
Let $p,c\ge 1$ be integers and let $G$ be a graph.
Let $H_0$ be a graph such that $\chi(H_0)=\chi_f(H_0)=c$, and let $H$ be the $p$-blowup of $H_0$.
Then
$$\chi(T(G,H))\le c\chi(G)$$
and if $\chi(G)\ge p$, then
$$\chi(T(G,H))\ge cp.$$
\end{lemma}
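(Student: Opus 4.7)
The plan splits into the upper bound $\chi(T(G,H))\le c\chi(G)$ and the lower bound $\chi(T(G,H))\ge cp$. For the upper bound I would fix a proper $c$-coloring $\varphi_0$ of $H_0$ (which exists since $\chi(H_0)=c$) and a proper $\chi(G)$-coloring $\psi$ of $G$. For each $w\in V(G_x)$, let $v\in V(G)$ be its image under the natural identification $G_x\cong G$, and let $u\in V(H_0)$ be the unique vertex with $u_i\in S_u$, where $i$ is the level of $x$ (well defined because the ordering of $V(H)$ places the vertices of $S_u$ consecutively). Assign $w$ the color $(\varphi_0(u),\psi(v))$; this uses $c\chi(G)$ colors. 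An edge inside a single copy $G_x$ corresponds to an edge of $G$, so the second coordinate differs; an edge between different copies connects a vertex $w$ at some level $i$ to its progenitor at level $j<i$ via some $u_ju_i\in E(H)$, and the definition of the $p$-blowup forces the blocks containing $u_j$ and $u_i$ to correspond to distinct vertices $u,u'\in V(H_0)$ with $uu'\in E(H_0)$, so $\varphi_0(u)\ne\varphi_0(u')$ and the first coordinate differs as well.

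For the lower bound, suppose toward a contradiction that $\Psi$ is a proper coloring of $T(G,H)$ using $N\le cp-1$ colors, and set $A_x:=\Psi(V(G_x))$ for each non-leaf $x$. Since $\Psi$ restricts to a proper coloring of $G_x\cong G$, we have $|A_x|\ge\chi(G)\ge p$. The strategy is to extract from $\Psi$ an $(N:p)$-coloring of $H_0$, which together with $\chi_f(H_0)=c$ would force $N\ge cp$ and contradict the assumption. I would construct a root-to-leaf branch $r=x_1,x_2,\ldots,x_{k+1}$ of $T_{n,k}$ together with distinguished vertices $v_i\in V(G_{x_i})$ satisfying $\theta_{x_i}(v_i)=x_{i+1}$, chosen greedily from the root downward under the invariant that for every $u\in V(H_0)$ the colors $\Psi(v_i)$ assigned to the already-processed indices $i$ with $u_i\in S_u$ are pairwise distinct. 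When picking $v_i$, one need only avoid the at most $p-1$ colors previously chosen in the block of $S_u$ containing $u_i$, so the bound $|A_{x_i}|\ge p$ guarantees at least one admissible choice of $v_i\in V(G_{x_i})$ and the construction never gets stuck.

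Once the branch is built, define $\phi:V(H_0)\to\binom{[N]}{p}$ by $\phi(u):=\{\Psi(v_i):u_i\in S_u\}$. The invariant gives $|\phi(u)|=p$ for every $u$. For any edge $uu'\in E(H_0)$ and any $i,i'$ with $u_i\in S_u$ and $u_{i'}\in S_{u'}$, the definition of the $p$-blowup yields $u_iu_{i'}\in E(H)$; since $v_i$ and $v_{i'}$ lie on a common root-to-leaf branch, one is a progenitor of the other, so $v_iv_{i'}\in E(T(G,H))$ and hence $\Psi(v_i)\ne\Psi(v_{i'})$. Therefore $\phi(u)\cap\phi(u')=\emptyset$, so $\phi$ is genuinely an $(N:p)$-coloring of $H_0$, forcing $N/p\ge\chi_f(H_0)=c$ and contradicting $N\le cp-1$.

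The delicate step is the greedy construction of the branch: the invariant has to be respected simultaneously for every block $S_u$, and the quantitative fact making it work is that at each step at most $p-1$ colors are forbidden in the current block while $|A_{x_i}|\ge p$. This is precisely where the hypothesis $\chi(G)\ge p$ enters the argument; the rest follows immediately from the definition of the fractional chromatic number.
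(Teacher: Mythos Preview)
Your proof is correct and follows essentially the same approach as the paper: the upper bound via the product coloring $(\varphi_0(u),\psi(v))$ is exactly the paper's partition into $c$ blocks of $\chi(G)$ colors, and the lower bound via a greedily chosen root-to-leaf branch producing an $(N{:}p)$-coloring of $H_0$ mirrors the paper's construction of a proper coloring of the strong $p$-blowup. The only cosmetic difference is that the paper avoids the colors of the last $p-1$ progenitors (a superset of your ``same-block'' forbidden colors, since the blocks $S_u$ are consecutive in the ordering), and one terminological slip: you write ``level of $x$'' where the paper's definition makes this the depth of $x$ in $T_{n,k}$ (equivalently, the level of $w$).
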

\begin{proof}
Let $V(H)=\{u_1,\ldots, u_k\}$, where $k=p|V(H_0)|$.
Note that $\chi(H)\le \chi(H_0)=c$.  Let $\varphi_H$ be a proper coloring of $H$ using $c$ colors.
Let $C_1$, \ldots, $C_c$ be pairwise disjoint sets of $\chi(G)$ colors.
For each non-leaf vertex $x$ of $T_{n,k}$ of depth $i$, color $G_x$ properly using the colors in $C_{\varphi_H(u_i)}$.
Observe that this gives a proper coloring of $T(G,H)$ using at most $c\chi(G)$ colors,
and thus $\chi(T(G,H))\le c\chi(G)$.

Suppose now that $\chi(G)\ge p$ and consider a proper coloring $\varphi$ of $T(G,H)$.
Let $P=x_1x_2\ldots x_{k+1}$ be a path in $T_{n,k}$ from its root $x_1$ to one of the leaves and let $\psi$ be a coloring
of $H$ constructed as follows.
Suppose that we already selected $x_1$, \ldots, $x_i$ for some $i\le k$.  Let $Z_i$ denote the set of progenitors
of level at least $i-p+1$ of the vertices of $G_{x_i}$.  Since $|Z_i|\le p-1$ and $\varphi$ uses at least $\chi(G)\ge p$ distinct
colors on $G_{x_i}$, there exists $v\in V(G_{x_i})$ such that $\varphi(v)$ is different from the colors of all vertices of $Z_i$.
We define $x_{i+1}=\theta_{x_i}(v)$ be the child of $x_i$ in $T_{n,k}$ corresponding to $v$, and set $\psi(u_i)=\varphi(v)$.

Note that $\psi$ is a proper coloring of $H$ such that for each $u\in V(H_0)$, $\psi$ assigns vertices in $S_u$
pairwise distinct colors.  Consequently, $\psi$ is a proper coloring of the strong $p$-blowup of $H_0$, and
thus $\psi$ (and $\varphi$) uses at least $p\chi_f(H_0)=cp$ distinct colors by Observation~\ref{obs-sbl}.  We conclude that $\chi(T(G,H))\ge cp$.
\end{proof}

For positive integers $p$ and $k$, let $K_{k\times p}$ denote the $p$-blowup of $K_k$,
i.e., the complete $k$-partite graph with parts of size $p$.  Let us summarize the results of this section
in the special case of the graph $T(G_0,K_{k\times 4})$ with $G_0$ planar.

\begin{corollary}\label{cor-prop}
Let $G_0$ be a planar graph with $n$ vertices and let $k_0$ be a positive integer.  Let $G=T(G_0,K_{k_0\times 4})$.
The graph $G$ has $O(n^{4k_0})$ vertices. If $G_0$ is $3$-colorable, then $\chi(G)\le 3k_0$, and otherwise $\chi(G)\ge 4k_0$.
Furthermore, every $(4k_0-3)$-connected graph appearing as a minor in $G$ is $(4k_0-4)$-apex.
\end{corollary}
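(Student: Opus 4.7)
The plan is to treat Corollary~\ref{cor-prop} as a direct instantiation of Lemmas~\ref{lemma-propgen} and \ref{lemma-propcolor}, with $H_0=K_{k_0}$ and $p=4$, so that $H=K_{k_0\times 4}$ has $k=4k_0$ vertices. The vertex bound $|V(G)|=O(n^{4k_0})$ is then immediate from the first sentence of Lemma~\ref{lemma-propgen}.

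For the chromatic number bounds I would apply Lemma~\ref{lemma-propcolor} with $c=k_0$, using the standard fact $\chi(K_{k_0})=\chi_f(K_{k_0})=k_0$. If $G_0$ is $3$-colorable, the upper bound of the lemma gives $\chi(G)\le c\chi(G_0)\le 3k_0$. If $G_0$ is not $3$-colorable, then since $G_0$ is planar the Four Color Theorem yields $\chi(G_0)=4\ge p$, so the lower bound of the lemma applies and gives $\chi(G)\ge cp=4k_0$. Thus the Four Color Theorem is exactly what makes the hypothesis $\chi(G)\ge p$ available in the non-$3$-colorable case, and no other coloring arguments are needed.

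For the apex/minor statement I would invoke the second part of Lemma~\ref{lemma-propgen}. Under the ordering convention fixed in the definition of $p$-blowup (that the vertices of each set $S_u$ are consecutive), the last four entries of $u_1,\ldots,u_k$ form the independent set $S_{u_{k_0}}$, so we may take $q=4$. Then $k-q+1=4k_0-3$, and the lemma says that every $(4k_0-3)$-connected minor $F$ of $G$ admits a set $X\subseteq V(F)$ with $|X|\le k-q=4k_0-4$ for which $F-X$ is a minor of $G_0$, hence planar; this is exactly the definition of $F$ being $(4k_0-4)$-apex.

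None of these steps presents a genuine obstacle: the substantive work has already been done in the two lemmas of the section. The one bookkeeping point worth double-checking is that the chosen ordering of $V(H)$ indeed places an independent set of size $4$ at the end (so that $q=4$ is valid), and that the arithmetic $k-q+1=4k_0-3$, $k-q=4k_0-4$ matches the connectivity and apex parameters stated in the corollary. Once those are confirmed, the three assertions of the corollary follow by plugging in.
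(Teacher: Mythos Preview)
Your proposal is correct and follows exactly the paper's approach: instantiate Lemma~\ref{lemma-propgen} with $H=K_{k_0\times 4}$, $k=4k_0$, $q=4$, and Lemma~\ref{lemma-propcolor} with $H_0=K_{k_0}$, $p=4$, $c=k_0$. One small remark: the Four Color Theorem is not actually needed for the lower bound, since ``$G_0$ is not $3$-colorable'' already gives $\chi(G_0)\ge 4=p$ directly; your invocation of it is harmless but superfluous.
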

\begin{proof}
Note that $\chi(K_{k_0})=\chi_f(K_{k_0})=k_0$, $|V(K_{k_0\times 4})|=4k_0$ and the last $4$ vertices of $K_{k_0\times 4}$ form an independent set.
The claims follow from Lemma~\ref{lemma-propgen} (with $H=K_{k_0\times 4}$, $k=4k_0$ and $q=4$, using the fact that every minor of a planar graph is planar)
and Lemma~\ref{lemma-propcolor} (with $H_0=K_{k_0}$, $p=4$ and $c=k_0$).
\end{proof}

\section{Non-approximability}\label{sec-nonapprox}

The main non-approximability result is a simple consequence of Corollary~\ref{cor-prop} and NP-hardness
of testing $3$-colorability of planar graphs.

\begin{proof}[Proof of Theorem~\ref{thm-nonapprox}]
Suppose for a contradiction that there exists such a polynomial-time algorithm $\Aa$,
taking as an input an $F$-minor-free graph $G$ and returning an integer $c$ such that $\chi(G)\le c<\beta\chi(G)+(4-3\beta)k_0$.

Let $G_0$ be a planar graph, and let $G=T(G_0,K_{k_0\times 4})$.  By Corollary~\ref{cor-prop},
the size of $G$ is polynomial in the size of $G_0$ and $G$ is $F$-minor-free.
Furthermore, if $G_0$ is $3$-colorable, then $\chi(G)\le 3k_0$, and otherwise $\chi(G)\ge 4k_0$.
Hence, if $G_0$ is $3$-colorable, then the value returned by the algorithm $\Aa$
applied to $G$ is less than $\beta\chi(G)+(4-3\beta)k_0\le 4k_0$, and if $G_0$ is not $3$-colorable, then the value returned is
at least $\chi(G)\ge 4k_0$.  This gives a polynomial-time algorithm to decide whether $G_0$ is $3$-colorable.

However, it is NP-hard to decide whether a planar graph is $3$-colorable~\cite{garey1979computers},
which gives a contradiction unless $\text{P}=\text{NP}$.
\end{proof}

Note that the graphs $T(G_0,K_{k_0\times 4})$ used in the proof of Theorem~\ref{thm-nonapprox} have large cliques
(of size greater than $k_0$).  This turns out not to be essential---we can prove somewhat weaker non-approximability result
even for graphs with clique number $3$.
To do so, we need to apply the construction with both $G$ and $H_0$ being triangle-free.  A minor issue is that testing $3$-colorability
of triangle-free planar graphs is trivial by Gr\"otzsch' theorem~\cite{grotzsch1959}.  However, this can be easily worked around.
\begin{lemma}\label{lemma-npcnotria}
Let $\GG$ denote the class of graphs such that all their $3$-connected minors with at least $12$ vertices are planar.
The problem of deciding whether a triangle-free graph $G\in\GG$ is $3$-colorable is NP-hard.
\end{lemma}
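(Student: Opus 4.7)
The plan is to reduce from the NP-complete problem of $3$-coloring planar graphs~\cite{garey1979computers}. Given a planar input graph $G_0$, I will construct a triangle-free graph $G\in\GG$ such that $\chi(G)\le 3$ if and only if $\chi(G_0)\le 3$. The main building block is the Gr\"otzsch graph $M_0$, a triangle-free $4$-chromatic graph on $11$ vertices with $20$ edges. Since a triangle-free planar graph on $11$ vertices has at most $2\cdot 11-4=18$ edges, $M_0$ is non-planar; more generally, Gr\"otzsch's theorem~\cite{grotzsch1959} implies that every $4$-critical triangle-free graph is non-planar. Passing to a $4$-edge-critical spanning subgraph of $M_0$ if necessary, I may assume that for a fixed edge $e=xy\in E(M_0)$ the graph $F:=M_0-e$ is triangle-free, $3$-chromatic, non-planar (still at least $19$ edges on $11$ vertices), and satisfies the \emph{equality-forcing} property: every proper $3$-coloring of $F$ assigns the same color to $x$ and $y$, since otherwise we could add back $e$ and obtain a proper $3$-coloring of $M_0$, contradicting $\chi(M_0)=4$.

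Next, I orient each edge of $G_0$ arbitrarily. For every oriented edge $u\to v$ of $G_0$, I replace it by an \emph{edge gadget} built from a fresh copy of $F$ with the terminal $x$ identified with $u$ and a single new edge added from $y$ to $v$. The equality-forcing property then yields $\varphi(u)=\varphi(x)=\varphi(y)\ne\varphi(v)$ in any proper $3$-coloring of the resulting graph $G$, so the edge constraints of $G_0$ are faithfully preserved; conversely, any proper $3$-coloring of $G_0$ extends to $G$ by properly $3$-coloring each gadget copy. Triangle-freeness of $G$ is easy to verify: gadget interiors are subgraphs of the triangle-free graph $F$; a new pendant edge $yv$ cannot lie in a triangle because the only neighbor of $y$ outside its own gadget is $v$; and for a vertex $w\in V(G_0)$, any two neighbors of $w$ in $G$ either lie in a single (triangle-free) gadget interior or in vertex-disjoint gadget interiors, and in either case are non-adjacent.

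To verify $G\in\GG$, I observe that for each oriented edge $u\to v$ of $G_0$, deleting $u$ and $v$ isolates the $10$ interior vertices of the corresponding gadget from the rest of $G$, so $\{u,v\}$ is a $2$-cut. Iterating this decomposition across all gadgets exhibits $G$ as an iterated $2$-sum whose $3$-connected pieces are, up to virtual edges, either copies of $M_0$ (on $11$ vertices) or $3$-connected pieces of the planar graph $G_0$. Since every $3$-connected minor of a $2$-sum lies inside a single summand, every $3$-connected minor of $G$ on at least $12$ vertices is a minor of the planar skeleton derived from $G_0$, and therefore planar. The main obstacle is making this decomposition argument rigorous when many gadgets share vertices of $V(G_0)$: I would handle this by induction on the number of edges of $G_0$, peeling off one gadget at a time across its $2$-cut and invoking the standard fact that $3$-connected minors respect $2$-sum decompositions.
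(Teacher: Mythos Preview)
Your approach is essentially identical to the paper's: both use the Haj\'os-sum construction with the Gr\"otzsch graph, replacing each edge of a planar instance by a copy of $M_0$ minus an edge (your $F$, the paper's $R$) glued in by identifying one terminal and adding a pendant edge from the other, and both verify membership in $\GG$ via a $2$-clique-sum decomposition whose $3$-connected pieces are either planar or minors of the $11$-vertex Gr\"otzsch graph. Your hedge about passing to a $4$-edge-critical spanning subgraph is unnecessary (the Gr\"otzsch graph is already edge-critical), and the non-planarity of $F$ is never used; otherwise your argument and the paper's coincide.
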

\begin{proof}
Let $R_0$ be the Gr\"otzsch graph ($R_0$ is a triangle-free graph with $11$ vertices and chromatic number $4$,
and all its proper subgraphs are $3$-colorable).
Let $R$ be a graph obtained from $R_0$ by removing any edge $uv$.  Note that $R$ is $3$-colorable and the vertices
$u$ and $v$ have the same color in every $3$-coloring.

Let $G_1$ be a planar graph.  Let $G_2$ be obtained from $G_1$ by replacing each edge $xy$ of $G_1$
by a copy of $R$ whose vertex $u$ is identified with $x$ and an edge is added between $v$ and $y$
(i.e., $G_2$ is obtained from $G_1$ by a sequence of Haj\'os sums with copies of $R_0$).
Clearly, $G_2$ is triangle-free, it is $3$-colorable if and only if $G_1$ is $3$-colorable, and
$|V(G_2)|=|V(G_1)|+10|E(G_1)|$.
Furthermore, $G_2$ is obtained from a planar graph by clique-sums with $R_0$ on cliques of size two,
and thus every $3$-connected minor of $G_2$ is either planar or a minor of $R_0$
(and thus has at most $11$ vertices).  Hence, $G_2$ belongs to $\GG$.

Since testing $3$-colorability of planar graphs is NP-hard, it follows that testing $3$-colorability of triangle-free graphs from $\GG$
is NP-hard.
\end{proof}

We also need a graph $H_0$ which is triangle-free and its fractional chromatic number is large and equal to its ordinary
chromatic number.  We are not aware of such a construction being previously studied; in Appendix, we prove the following.

\begin{lemma}\label{lemma-constr}
For every positive integer $m$, there exists a triangle-free graph $H_m$ with $O(m^4\log^2 m)$ vertices
such that $\chi(H_m)=\chi_f(H_m)=m$.
\end{lemma}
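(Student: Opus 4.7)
The plan is to realize $H_m$ as a tensor (categorical) product of a carefully chosen triangle-free graph $G_0$ with a complete graph $K_m$.  First, I would invoke the classical Erd\H{o}s probabilistic construction of triangle-free graphs with large chromatic number: by taking $G_0$ to be the graph obtained from $G(n_0,p)$ with $p=\Theta(n_0^{-1/2})$ and $n_0 = O(m^2 \log^2 m)$, and deleting one edge from each triangle, standard estimates show that with positive probability $\alpha(G_0) \le n_0/m$.  Since $\chi_f(G)\ge |V(G)|/\alpha(G)$ holds for every graph, this yields $\chi_f(G_0)\ge m$.

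Second, I would set $H_m := G_0 \times K_m$, where $\times$ denotes the tensor product: the vertex set is $V(G_0)\times\{1,\ldots,m\}$, and $(u,i)(v,j)$ is an edge iff $uv\in E(G_0)$ and $i\ne j$.  Any triangle in $H_m$ projects to a triangle in $G_0$, so $H_m$ is triangle-free.  The coordinate projection $H_m\to K_m$ is a graph homomorphism, so $\chi(H_m)\le \chi(K_m)=m$.

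Third, for the lower bound $\chi_f(H_m)\ge m$ I would invoke Zhu's theorem that the fractional version of Hedetniemi's conjecture holds, namely $\chi_f(G\times H)=\min(\chi_f(G),\chi_f(H))$.  Applied here, $\chi_f(H_m) = \min(\chi_f(G_0), m) = m$.  Chaining the inequalities $m = \chi_f(H_m) \le \chi(H_m) \le m$, we obtain $\chi(H_m) = \chi_f(H_m) = m$.  The total vertex count is $m\cdot n_0 = O(m^3\log^2 m)$, well within the claimed bound $O(m^4\log^2 m)$.

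The main obstacle is the lower bound $\chi_f(H_m)\ge m$, which rests on Zhu's nontrivial theorem on the fractional Hedetniemi conjecture; the opposite inequality and the triangle-freeness are routine.  To avoid that machinery, a more self-contained route would be to construct $H_m$ directly as a vertex-transitive triangle-free graph (for instance a Cayley graph over an abelian group with a sum-free connection set of suitable size), for which vertex transitivity automatically gives $\chi_f(H_m) = |V(H_m)|/\alpha(H_m)$ by averaging, so that one can pin both $\alpha(H_m)$ and $\chi(H_m)$ directly to secure $\chi = \chi_f = m$; the slight looseness in the claimed $O(m^4\log^2 m)$ bound is consistent with an elementary construction of this flavour rather than the tighter tensor-product route above.
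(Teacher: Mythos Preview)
Your approach is correct and genuinely different from the paper's.  The paper builds $H_m$ in one shot: it takes $m$ parts $V_1,\ldots,V_m$ each of size $n=\Theta(m^3\log^2 m)$, puts a random graph with $p=\Theta((mn)^{-1/2})$ on the whole vertex set, removes a maximal edge-disjoint union of triangles together with all intra-part edges, and then uses a Talagrand/Chernoff argument (Lemmas~\ref{lemma-single} and~\ref{lemma-exin}) to show that with positive probability the \emph{only} independent sets of size $n$ are the $V_i$.  The $m$-partition gives $\chi\le m$ directly, and $\alpha=|V|/m$ gives $\chi_f\ge m$, so no external result about tensor products is needed.  Your two-stage route---first a triangle-free $G_0$ with $\chi_f(G_0)\ge m$, then $G_0\times K_m$---is cleaner structurally and, with Zhu's fractional Hedetniemi theorem as a black box, yields the better count $O(m^3\log^2 m)$.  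What the paper's direct construction buys is self-containment: it never invokes Zhu and instead does the concentration analysis from scratch, which is presumably why it settles for the looser $O(m^4\log^2 m)$.

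One caution about your first step: at $p=\Theta(n_0^{-1/2})$ the expected number of triangles in $G(n_0,p)$ is $\Theta(n_0^{3/2})$, the same order as the number of edges, so ``delete one edge from each triangle'' is not a small perturbation and the claim that $\alpha(G_0)\le n_0/m$ survives is not a one-line alteration argument.  This is exactly the point where the paper spends its effort (Talagrand on the size of the maximum subtriangular edge set).  You can sidestep the issue by simply citing the Ramsey lower bound $R(3,t)=\Omega(t^2/\log^2 t)$ (Spencer via the Local Lemma, or Kim for the sharper $t^2/\log t$), which directly furnishes a triangle-free $G_0$ on $O(m^2\log^2 m)$ vertices with $\alpha(G_0)\le n_0/m$; but as written, ``standard estimates'' undersells what is actually required.
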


Theorem~\ref{thm-k4free} now follows in the same way as Theorem~\ref{thm-nonapprox}, using the graphs from Lemma~\ref{lemma-constr}
instead of cliques.

\begin{proof}[Proof of Theorem~\ref{thm-k4free}]
Suppose for a contradiction that there exists such a polynomial-time algorithm $\Aa$,
taking as an input an $F$-minor-free graph $G$ with $\omega(G)\le 3$
and returning an integer $c$ such that $\chi(G)\le c<\beta\chi(G)+d$.
Recall that $m=\lceil d/(4-3\beta)\rceil$.
Let $H_m$ be the graph from Lemma~\ref{lemma-constr}. Let $k_0=|V(H_m)|$ and
let $H$ be the $4$-blowup of $H_m$.  Let $\GG$ be the class of graphs such that all their $3$-connected minors with at least $12$
vertices are planar.

Consider a triangle-free graph $G_0\in \GG$, and let $G=T(G_0,H)$.  By Lemma~\ref{lemma-propgen},
the size of $G$ is polynomial in the size of $G_0$.  Consider any $(4k_0-1)$-connected minor $F'$ of $G$.
By Lemma~\ref{lemma-propgen}, there exists a set $X$ of size at most $4k_0-4$ such that $F'-X$ is a minor of $G_0$.
Since $F'-X$ is $3$-connected and $G_0\in \GG$, we conclude that either $|V(F')|\le |X|+11\le 4k_0+7$ or $F'-X$ is planar.
Consequently, $F'\neq F$, and thus $G$ does not contain $F$ as a minor.
Furthermore, $\omega(G)\le\omega(G_0)+\omega(H)-1\le 3$.

Recall that $\chi(H_m)=\chi_f(H_m)=m$.
By Lemma~\ref{lemma-propcolor}, if $G_0$ is $3$-colorable, then $\chi(G)\le 3m$, and otherwise $\chi(G)\ge 4m$.
Hence, if $G_0$ is $3$-colorable, then the value returned by the algorithm $\Aa$
applied to $G$ is less than $\beta\chi(G)+d\le 3\beta m+d\le 4m$, and if $G_0$ is not $3$-colorable, then the value returned is
at least $\chi(G)\ge 4m$.  This gives a polynomial-time algorithm to decide whether $G_0$ is $3$-colorable,
in contradiction to Lemma~\ref{lemma-npcnotria} unless $\text{P}=\text{NP}$.
\end{proof}

\section{Approximation algorithms}\label{sec-approx}

Let us now turn our attention to the additive approximation algorithms.
The algorithms we present use ideas similar to the ones of the $2$-approximation algorithm of Demaine et al.~\cite{demaine2005algorithmic}
and of the additive approximation algorithms of Kawarabayashi et al.~\cite{kawarabayashi2009additive} and Demaine et al.~\cite{demaine2009approximation}.
We find a partition of the vertex set of the input graph $G$ into parts $L$ and $C$ such that $G[L]$ has bounded tree-width
(and thus its chromatic number can be determined exactly) and $G[C]$ has bounded chromatic number, and color the parts
using disjoint sets of colors.
The existence of such a decomposition is proved using the minor structure theorem~\cite{robertson2003graph}, in the variant
limiting the way apex vertices attach to the surface part of the decomposition.  The proof of this stronger version can be
found in~\cite{apex}.  Let us now introduce definitions necessary to state this variant of the structure theorem.

A \emph{tree decomposition} of a graph $G$ is a pair $(T,\beta)$, where $T$ is a tree and $\beta$ is a function
assigning to each vertex of $T$ a subset of $V(G)$, such that for each $uv\in E(G)$ there exists $z\in V(T)$ with
$\{u,v\}\subseteq \beta(z)$, and such that for each $v\in V(G)$, the set $\{z\in V(T):v\in\beta(z)\}$ induces a non-empty
connected subtree of $T$.  The \emph{width} of the tree decomposition is $\max\{|\beta(z)|:z\in V(T)\}-1$,
and the \emph{tree-width} of a graph is the minimum of the widths of its tree decompositions.

The decomposition is \emph{rooted} if $T$ is rooted.
For a rooted tree decomposition $(T,\beta)$ and a vertex $z$ of $T$ distinct from the root,
if $w$ is the parent of $z$ in $T$, we write $\beta\uparrow z\colonequals \beta(z)\cap\beta(w)$ and
$\beta\downarrow z\colonequals \beta(z)\setminus \beta(w)$.  If $z$ is the root of $T$, then
$\beta\uparrow z\colonequals\emptyset$ and $\beta\downarrow z\colonequals\beta(z)$.  
The \emph{torso expansion} of a graph $G$ with respect to its rooted tree decomposition $(T,\beta)$
is the graph obtained from $G$ by adding edges of cliques on $\beta\uparrow z$ for all $z\in V(T)$.

A \emph{path decomposition} is a tree decomposition $(T,\beta)$ where $T$ is a path.
A \emph{vortex with boundary sequence $v_1$, \ldots, $v_s$ and depth $d$} is a graph with a path decomposition
$(p_1p_2\ldots p_s,\beta)$ such that $|\beta(p_i)|\le d+1$ and $v_i\in \beta(p_i)$ for $i=1,\ldots,s$.
An embedding of a graph in a surface is \emph{$2$-cell} if each face of the embedding is homeomorphic to an open disk.
\begin{theorem}\label{thm-struct}
For every non-negative integer $t$ and a $t$-apex graph $H$, there exists a constant $a_H$ such that the following holds.
For every $H$-minor-free graph $G$, there exists a rooted tree decomposition $(T,\beta)$ of $G$ with the following properties.
Let $G'$ denote the torso expansion of $G$ with respect to $(T,\beta)$.
For every $v\in V(T)$, there exists a set $A_v\subseteq \beta(v)$ of size at most $a_H$ with $\beta\uparrow v\subseteq A_v$,
a set $A'_v\subseteq A_v$ of size at most $t-1$, and
subgraphs $G_v$, $G_{v,1}$, \ldots, $G_{v,m}$ of $G'[\beta(v)\setminus A_v]$ for some $m\le a_H$ such that
\begin{itemize}
\item[(a)] $G'[\beta(v)\setminus A_v]=G_v\cup G_{v,1}\cup \ldots\cup G_{v,m}$, and for $1\le i<j\le m$,
the graphs $G_{v,i}$ and $G_{v,j}$ are vertex-disjoint and $G'$ contains no edges between $V(G_{v,i})$ and $V(G_{v,j})$,
\item[(b)] the graph $G_v$ is $2$-cell embedded in a surface $\Sigma_v$ in that $H$ cannot be drawn,
\item[(c)] for $1\le i\le m$, $G_{v,i}$ is a vortex of depth $a_H$ intersecting $G_v$ only in its boundary sequence,
and this sequence appears in order in the boundary of a face $f_{v,i}$ of $G_v$, and $f_{v,i}\neq f_{v,j}$ for $1\le i<j\le m$,
\item[(d)] vertices of $G_v$ have no neighbors in $A_v\setminus A'_v$, and
\item[(e)] if $w$ is a child of $v$ in $T$ and $\beta(w)\cap V(G_v)\neq \emptyset$, then 
$\beta\uparrow w\subseteq V(G_v)\cup A'_v$.
\end{itemize}
Furthermore, the tree decomposition and the sets and subgraphs as described can be found in polynomial time.
\end{theorem}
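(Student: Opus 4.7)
The plan is to derive this strengthened decomposition from the standard Robertson--Seymour structure theorem for $H$-minor-free graphs, with an extra refinement step that uses the hypothesis that $H$ is $t$-apex to bound the ``essential'' apex set $A'_v$.

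First I would invoke the algorithmic form of the Robertson--Seymour structure theorem (due to Demaine--Hajiaghayi--Kawarabayashi, and later sharpened by Grohe--Kawarabayashi--Reed) to obtain in polynomial time a rooted tree decomposition $(T,\beta)$ in which the torso at each node $v$ is near-embedded in some surface $\Sigma_v$ not admitting $H$: there is an apex set $A_v$ of size bounded only in $|V(H)|$ whose removal leaves $G_v\cup G_{v,1}\cup\cdots\cup G_{v,m}$ as in (a), (b), (c), with each $G_{v,i}$ a bounded-depth vortex attached to a distinct face of the $2$-cell embedding $G_v$. By the standard linkedness trick one may further absorb every adhesion $\beta\uparrow v$ into $A_v$ at the cost of a constant blow-up in $a_H$, so that $\beta\uparrow v\subseteq A_v$.

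The extra work lies in splitting $A_v$ as $A'_v\cup(A_v\setminus A'_v)$ with $|A'_v|\le t-1$ and no edges between $A_v\setminus A'_v$ and $V(G_v)$. Call an apex vertex \emph{active} if it has a neighbour in the embedded part $G_v$. The claim is that no more than $t-1$ vertices of $A_v$ need to remain active. The argument is extremal: by first refining the decomposition so that each embedded piece $G_v$ has large representativity in $\Sigma_v$ (short non-contractible separators are broken out into their own child bags), every planar graph up to a chosen size appears as a minor inside $G_v$; if $t$ apex vertices simultaneously had rich attachments to $G_v$, a pigeonhole argument on the attachment patterns would let us hook $t$ apex cones onto any prescribed planar graph, producing $H$ itself (since $H-X$ is planar for some $|X|\le t$) as a minor, a contradiction. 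Once the number of active apices is forced below $t$, any inactive apex $x\in A_v\setminus A'_v$ can be removed from $\beta(v)$ and pushed into the unique descendant subtree where its neighbours (all contained in $A_v$) already reside, establishing (d). Property (e) is then enforced by the usual adhesion-redirection argument: a separator $\beta\uparrow w$ between $v$ and a child $w$ whose bag meets $G_v$ never needs to pass through $A_v\setminus A'_v$, because such vertices have no neighbours in $G_v$, so the separator can be rerouted to lie in $V(G_v)\cup A'_v$.

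The main obstacle is the quantitative rich-attachment step: a careful apex-counting argument, closely tied to the way vortices absorb short non-contractible curves and to the representativity of the $2$-cell embedding, is needed to pin the number of essential apices down to exactly $t-1$, matching the definition of $t$-apex. Everything else---the algorithmic realization of each structural move in polynomial time, the bookkeeping of the constant $a_H$, and the compatibility of successive refinements with the tree structure---is routine once this extremal bound is in hand; the detailed proof is carried out in~\cite{apex}.
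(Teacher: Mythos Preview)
The paper does not give its own proof of this theorem: it is stated as a known strengthening of the Robertson--Seymour structure theorem and the proof is simply attributed to~\cite{apex}. Your proposal likewise ends by deferring to~\cite{apex}, so in that sense the two agree, and the high-level sketch you give---invoke the algorithmic structure theorem for (a)--(c), then use the $t$-apex hypothesis on $H$ together with large representativity of $G_v$ to cap the number of apices genuinely attached to the surface part at $t-1$---is indeed the shape of the argument in that reference.

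One small inaccuracy in your sketch: once $A'_v$ is defined as the set of apices with neighbours in $G_v$, property~(d) holds by definition; there is no need to ``push inactive apices into the unique descendant subtree where their neighbours reside,'' and in fact that step as you describe it would fail, since an inactive apex may still have neighbours in the vortices $G_{v,i}$ or in other apices, not only in $A_v$. The vertex stays in $\beta(v)$; it simply sits in $A_v\setminus A'_v$.
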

Informally, the graph $G$ is a clique-sum of the graphs $G'[\beta(v)]$ for $v\in V(T)$, and $\beta(v)$ contains a bounded-size set $A_v$
of apex vertices such that $G'[\beta(v)]-A_v$ can be embedded in $\Sigma_v$ up to a bounded number of vortices of bounded depth.
Furthermore, at most $t-1$ of the apex vertices (forming the set $A'_v$) can have neighbors in the part $G_v$ of $G'[\beta(v)]-A_v$
drawn in $\Sigma_v$, or in the other bags of the decomposition that intersect $G_v$.  Note that it is also possible that
$\Sigma_v$ is the null surface, and consequently $A_v=\beta(v)$.

We need the following observation on graphs embedded up to vortices.
\begin{lemma}\label{lemma-remove}
Let $\Sigma$ be a surface of Euler genus $g$ and let $a$ be a non-negative integer.  Let $G$ be a graph and let $G_0$, $G_1$, \ldots, $G_m$
be its subgraphs such that $G=G_0\cup G_1\cup \ldots \cup G_m$, the subgraphs $G_1$, \ldots, $G_m$
are pairwise vertex-disjoint and $G$ contains no edges between them, $G_0$ is $2$-cell embedded in $\Sigma$,
and there exist pairwise distinct faces $f_1$, \ldots, $f_m$ of this embedding such that for $1\le i\le m$,
$G_i$ intersects $G_0$ only in a set of vertices contained in the boundary of $f_i$.
If the graphs $G_1$, \ldots, $G_m$ have tree-width at most $a$, then there exists a subset $L_0$ of vertices of $G_0$
such that $G_0-L_0$ is planar and the graph $G_0[L_0]\cup G_1\cup\ldots\cup G_m$ has tree-width at most $26g+9m+a$.
\end{lemma}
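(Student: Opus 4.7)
My plan is to choose a ``skeleton'' $L_0\subseteq V(G_0)$ whose removal planarises $G_0$, that contains the vertex set of every vortex face boundary, and that admits a tree decomposition compatible with the natural path decompositions of the vortices $G_1,\ldots,G_m$.

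\textbf{Step 1 (skeleton from surface topology).} Fixing a spanning tree $T$ of $G_0$, I would appeal to the tree--cotree decomposition of the $2$-cell embedding in $\Sigma$: one can select at most $2g$ non-tree edges whose fundamental cycles $C_1,\ldots,C_{2g}$ form a homology basis of $\Sigma$, so that cutting $\Sigma$ along them leaves a disk, and hence $G_0-(V(C_1)\cup\cdots\cup V(C_{2g}))$ is planar. Setting
$$L_0 \;=\; V(C_1)\cup\cdots\cup V(C_{2g}) \;\cup\; \bigcup_{i=1}^m V(\partial f_i)$$
still leaves $G_0-L_0$ planar, and moreover the boundary sequence of every vortex $G_i$ now lies inside $L_0$.

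\textbf{Step 2 (tree-decomposing the skeleton, compatible with the face boundaries).} The subgraph $G_0[L_0]$ has cyclomatic number $O(g+m)$---it is essentially a subtree of $T$ together with the $2g$ cotree edges and, once each face cycle $\partial f_i$ is cut open at a single vertex, $m$ additional paths glued onto the tree---so it already has tree-width $O(g+m)$. The more delicate requirement, needed for Step 3, is to produce a tree decomposition of width at most $26g+9m$ in which, for each $i$, the boundary sequence $v_{i,1},\ldots,v_{i,s_i}$ of $G_i$ appears in order along a \emph{path} of bags. I would arrange this by cutting $\Sigma$ one homology cycle at a time and tracking how each cut subdivides each $\partial f_i$, then laying out the resulting pieces linearly within the tree decomposition.

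\textbf{Step 3 (grafting the vortices).} Each vortex $G_i$ comes with a path decomposition $(p_{i,1}\ldots p_{i,s_i},\beta_i)$ of width at most $a$ in which $v_{i,j}\in\beta_i(p_{i,j})$. Because the $v_{i,j}$ now lie in order along a path of bags of the skeleton decomposition produced in Step 2, I can graft this path decomposition onto it, adding to each matching bag the at most $a$ vortex vertices in $\beta_i(p_{i,j})\setminus\{v_{i,j}\}$. Distinct vortices are pairwise vertex-disjoint off $G_0$ and have no edges between them, so their graftings do not interfere, and the total width increases by at most $a$, yielding the claimed bound $26g+9m+a$ for the tree-width of $G_0[L_0]\cup G_1\cup\cdots\cup G_m$.

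\textbf{Main obstacle.} The hardest part is Step 2: bounding the tree-width of $G_0[L_0]$ by $26g+9m$ while simultaneously forcing each of the $m$ vortex face boundaries to lie along a \emph{path} (not just a subtree) of bags. The homology cycles and the face boundaries can interleave on $\Sigma$ in intricate ways---a single $C_j$ may cross several $\partial f_i$, and two $\partial f_i$ may share vertices---and it is precisely the careful surface-cutting and bookkeeping of these interactions, likely combined with an inductive reduction of Euler genus, that should yield the specific constants $26$ and $9$.
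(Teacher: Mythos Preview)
Your approach diverges from the paper's in two essential ways, and the first of them is a genuine gap.

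\textbf{The gap in Step~2.} You claim that $G_0[L_0]$ has cyclomatic number (and hence tree-width) $O(g+m)$ because it is ``essentially a subtree of $T$ together with the $2g$ cotree edges and $m$ additional paths''. That description is of the \emph{non-induced} subgraph consisting of the chosen cycles $C_1,\ldots,C_{2g}$ and the face boundary walks. But $G_0[L_0]$ is the \emph{induced} subgraph on their vertex set, and nothing in the tree--cotree construction prevents $G_0$ from having many additional chord edges between vertices of $L_0$ (drawn in faces away from the cycles and from the $f_i$). Such chords can make the tree-width of $G_0[L_0]$ unbounded in $g$ and $m$, so the bound $26g+9m$ cannot be obtained from your $L_0$ as stated. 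The same problem hits the face boundaries you include: the boundary walk of $f_i$ can have arbitrarily many chords inside $G_0$.

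\textbf{How the paper handles this.} The paper chooses instead a subgraph $F\subseteq G'_0$ (where $G'_0$ is $G_0$ with one new vertex $r_i$ per vortex face) that is $2$-cell in $\Sigma$ and \emph{minimal} with respect to $|V(F)|+|E(F)|$, and sets $L_0=V(F)\cap V(G_0)$. Minimality is exactly what kills the chords: the paper shows that outside a bounded set $M_0$ (degree-$\ge 3$ vertices of $F$ and their neighbours), every vertex of $F$ has only two neighbours in $V(F)$ even in the ambient graph $G'_0$. A second minimality argument shows that each vortex attaches to $F$ in at most $9$ places (the sets $N_i$), and degree-$\ge 3$ vertices of $F$ contribute at most $26g$ nearby vertices (the set $M_1$). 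Crucially, the face boundaries are \emph{not} put into $L_0$; instead the vortex vertices in $V(G_i)\cap V(G_0)\setminus V(F)$ stay in the vortices, with $N_i$ recording the interface.

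\textbf{Why path-alignment is unnecessary.} Your Step~3 requires the boundary sequence of each vortex to lie along a path of bags, and you identify this as the main obstacle. The paper's argument sidesteps this entirely: let $M=(M_1\cup\bigcup_i N_i)\cap V(G_0)$, of size at most $26g+9m$, and form $G'$ from $G_1\cup\cdots\cup G_m$ by adding $M$ as universal vertices; then $\operatorname{tw}(G')\le a+|M|$. Every vertex of $L_0\setminus V(G')$ has degree~$2$ in $G_0[L_0]\cup G_1\cup\cdots\cup G_m$ (by the no-chords property), so the latter graph is a subgraph of a subdivision of $G'$ and inherits the same tree-width bound. No compatibility between the skeleton decomposition and the vortex path decompositions is ever needed.

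In short: the missing idea is the minimality of the cut system, which simultaneously controls chords (your Step~2 gap) and the vortex interfaces (eliminating your Step~3 obstacle).
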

\begin{proof}
If $\Sigma$ is the sphere, then we can set $L_0=\emptyset$; hence, we can assume that $g\ge 1$.
Let $G'_0$ be the graph obtained from $G_0$ by, for $1\le i\le m$, adding a new vertex $r_i$ drawn inside $f_i$
and joined by edges to all vertices of $V(G_0)\cap V(G_i)$.  Note that $G'_0$ has a $2$-cell embedding in $\Sigma$ extending the embedding of $G$.
Let $F$ be a subgraph of $G'_0$ such that the embedding of $F$ in $\Sigma$ inherited from the embedding of $G'_0$
is $2$-cell and $|V(F)|+|E(F)|$ is minimum.  Then $F$ has only one face, since otherwise it is possible to remove an edge separating distinct faces of $F$,
and $F$ has minimum degree at least two, since otherwise we can remove a vertex of degree at most $1$ from $F$.
By generalized Euler's formula, we have $|E(F)|=|V(F)|+g-1$, and thus $F$ contains at most $2(g-1)$ vertices of degree
greater than two.  By considering the graph obtained from $F$ by suppressing vertices of degree two, we see that
$F$ is either a cycle (if $g=1$) or a subdivision of a graph with at most $3(g-1)$ edges.

Let $M_0$ be the set of vertices of $F$ of degree at least three and their neighbors.
We claim that each vertex of $V(F)\setminus M_0$ is adjacent in $G'_0$ to only two vertices of $V(F)$.
Indeed, suppose that a vertex $x\in V(F)\setminus M_0$ has at least three neighbors in $G'_0$ belonging to $V(F)$.
Let $w$ and $y$ be the neighbors of $x$ in $F$, and let $z$ be a vertex distinct from $w$ and $y$ adjacent to $x$ in $G'_0$.
The graph $F+xz$ has two faces, and by symmetry, we can assume that the edge $xy$ separates them.  Since $x\not\in M_0$,
the vertex $y$ has degree two in $F$, and thus the embedding of $F+xz-y$ is $2$-cell, contradicting the minimality of $F$.

Let $M_1$ be the set of vertices of $F$ at distance at most $4$ from a vertex of degree greater than two.  Note that $|M_1|\le 26g$.
For $1\le i\le m$, let $N_i$ denote the set of vertices of $F-M_1$ that are in $G'_0$ adjacent to a vertex of $V(G_i)\cap V(G_0)\setminus V(F)$.
We claim that $|N_i|\le 9$.  Indeed, suppose for a contradiction that $|N_i|\ge 10$ and consider a path $w_4w_3w_2w_1xy_1y_2y_3y_4$ in $F$ such that
$x$ belongs to $N_i$ (vertices $x$, $w_1$, \ldots, $w_4$, $y_1$, \ldots, $y_4$ have degree two in $F$, since $x\not\in M_1$).
If $r_i\in V(F)$, then let $Q$ be a path in $G'_0$ of length two between $x$ and $r_i$ through a vertex of $V(G_i)\cap V(G_0)\setminus V(F)$; note that
$r_i\not\in\{w_1,w_2,y_1,y_2\}$, since $r_i$ has at least $10$ neighbors in $V(F)$ and belongs to $M_0$ by the previous paragraph
and $x\not\in M_1$.  If $r_i\not\in V(F)$, then there exists a vertex $z\in N_i\setminus \{x,w_1,\ldots, w_4,y_1,\ldots,y_4\}$;
we let $Q$ be a path of length at most four between $x$ and $z$ passing only through their neighbors in $V(G_i)\cap V(G_0)\setminus V(F)$ and possibly through $r_i$.
By symmetry, we can assume that the edge $xy_1$ separates the two faces of $F+Q$, and the graph
$F+Q-\{y_1,y_2\}$ if $r_i\in V(F)$ or $F+Q-\{y_1,y_2,y_3,y_4\}$ if $r_i\not\in V(F)$ contradicts the minimality of $F$.

Let $M=(M_1\cup \bigcup_{i=1}^m N_i)\cap V(G_0)$; we have $|M|\le 26g+9m$.  Let $L_0=V(F)\cap V(G_0)$.
Clearly, $G_0-L_0\subseteq G'_0-V(F)$ is planar.  Let $G'$ be the graph obtained from $G_1\cup\ldots\cup G_m$ by adding
vertices of $M$ as universal ones, adjacent to all other vertices of $G'$.  The tree-width of $G'$ is at most $a+|M|\le 26g+9m+a$.
Note that each vertex of $L_0\setminus V(G')$ has degree two in $G_0[L_0]\cup G_1\cup\ldots\cup G_m$,
and thus $G_0[L_0]\cup G_1\cup\ldots\cup G_m$ is a subgraph of a subdivision of $G'$.  We conclude that
the tree-width of $G_0[L_0]\cup G_1\cup\ldots\cup G_m$ is also at most $26g+9m+a$.
\end{proof}
Note that the set $L_0$ can be found in polynomial time.  For the clarity of presentation of the proof we selected $F$ with $|V(F)|+|E(F)|$
minimum; however, it is sufficient to start with an arbitrary inclusionwise-minimal subgraph with exactly one $2$-cell face
(obtained by repeatedly removing edges that separate distinct faces and vertices of degree at most $1$) and
repeatedly perform the reductions described in the proof until each vertex of $V(F)\setminus M_0$ is adjacent in $G'_0$ to only
two vertices of $V(F)$ and until $|N_i|\le 9$ for $1\le i\le m$.

For positive integers $t$ and $a$, we say that a rooted tree decomposition $(T,\beta)$ of a graph $G$ is \emph{$(t,a)$-restricted}
if for each vertex $v$ of $T$, the subgraph of the torso expansion of $G$ induced by $\beta\downarrow v$
is planar, $|\beta\uparrow v|\le t-1$, and each vertex of $\beta\downarrow v$ has at most $a-1$ neighbors in $G$ that belong to
$\beta\uparrow v$.  Using the decomposition from Theorem~\ref{thm-struct}, we now partition the considered graph
into a part of bounded tree-width and a $(t,t)$-restricted part.

\begin{theorem}\label{thm-decomp}
For every positive integer $t$ and a $t$-apex graph $H$, there exists a constant $c_H$ with the following property.
The vertex set of any $H$-minor-free graph $G$ can be partitioned in polynomial time into two parts $L$ and $C$ such that
$G[L]$ has tree-width at most $c_H$ and $G[C]$ has a $(t,t)$-restricted rooted tree decomposition.
Additionally, for any such graph $H$ and positive integers $a\le b$, there exists a constant $c_{H,a,b}$ such that
if $G$ is $H$-minor-free and does not contain $K_{a,b}$ as a subgraph, then $L$ and $C$ can be chosen so that $G[L]$ has tree-width at
most $c_{H,a,b}$ and $G[C]$ has a $(t,a)$-restricted rooted tree decomposition.
\end{theorem}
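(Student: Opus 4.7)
The plan is to combine Theorem~\ref{thm-struct} with Lemma~\ref{lemma-remove}. Applying Theorem~\ref{thm-struct} to the $H$-minor-free graph $G$ yields a rooted tree decomposition $(T,\beta)$ in which each bag $v$ has the prescribed structure: apex vertices $A_v$ (with the active subset $A'_v$ of size at most $t-1$), a surface-embedded graph $G_v$, and $m\le a_H$ vortices $G_{v,1},\dots,G_{v,m}$ of depth at most $a_H$. For each bag $v$ I apply Lemma~\ref{lemma-remove} to $G_v$ together with the vortices (which have tree-width at most $a_H$) to produce a set $L_v\subseteq V(G_v)$ with $G_v-L_v$ planar and $G_v[L_v]\cup G_{v,1}\cup\cdots\cup G_{v,m}$ of tree-width bounded by a constant depending only on $H$. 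I then set $L=\bigcup_v\bigl((A_v\setminus A'_v)\cup L_v\cup V(G_{v,1})\cup\cdots\cup V(G_{v,m})\bigr)$ and $C=V(G)\setminus L$, so that the $C$-portion of bag $v$ is contained in $A'_v\cup(V(G_v)\setminus L_v)$.

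To bound the tree-width of $G[L]$ by some constant $c_H$, I refine $(T,\beta)$ locally: at each $v$ the $L$-part consists of at most $a_H$ apex vertices sitting on top of a graph of bounded tree-width (from Lemma~\ref{lemma-remove}), so it admits a local tree decomposition of bounded width, and these glue together along $T$ via the clique-cuts $\beta\uparrow v\cap L$ (of size at most $a_H$).

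For $G[C]$ I construct a $(t,t)$-restricted tree decomposition $T_C$ by refining each node $v$ so that the at most $t-1$ vertices of $A'_v\cap C$ form the upper link of a bag whose $\beta\downarrow$ is the planar core $(V(G_v)\setminus L_v)\cap C$. Concretely, $v$ is replaced in $T_C$ by a chain introducing the vertices of $A'_v\cap C$ one at a time---so that every upper-chain bag has $\beta\downarrow$ equal to a single vertex (trivially planar) and $|\beta\uparrow|<t$---terminated by a ``lower'' bag $v^\downarrow$ with bag contents $(A'_v\cap C)\cup((V(G_v)\setminus L_v)\cap C)$, to which the subtrees of children of $v$ in $T$ are attached. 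At $v^\downarrow$ we have $\beta\uparrow=A'_v\cap C$ (size $\le t-1$), $\beta\downarrow\subseteq V(G_v)\setminus L_v$ (planar), and every vertex of $\beta\downarrow$ has at most $|A'_v|\le t-1$ neighbors in $\beta\uparrow$, exactly giving the $(t,t)$-restricted property.

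The main obstacle is the bookkeeping: verifying that $T_C$ remains a valid tree decomposition (I order each chain so that apex vertices shared with ancestors in $T$ come first, which keeps every vertex's subtree in $T_C$ connected) and that the induced torso-subgraph on each $\beta\downarrow$ is really planar---the torso-added cliques arising from descendant attachments either lie entirely inside the upper link $A'_v\subseteq\beta\uparrow v^\downarrow$ (so contribute no edge inside $\beta\downarrow v^\downarrow$) or, by properties~(d) and~(e) of Theorem~\ref{thm-struct} (namely that $A_v\setminus A'_v\subseteq L$ has no neighbors in $G_v$, and that clique-sum attachments touching $V(G_v)$ are confined to $V(G_v)\cup A'_v$), consist of vertices of $V(G_v)$ whose pairwise edges are already edges of the planar graph $G_v-L_v$. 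For the stronger $(t,a)$-restricted conclusion under $K_{a,b}$-freeness, I additionally move to $L$ every vertex of $V(G_v)\setminus L_v$ having at least $a$ neighbors in $A'_v$: by $K_{a,b}$-freeness at most $\binom{t-1}{a}(b-1)$ such vertices occur per bag (any $a$-subset of $A'_v$ has fewer than $b$ common neighbors), so $G[L]$ retains bounded tree-width (with a new constant $c_{H,a,b}$), while every surviving $C$-vertex has at most $a-1$ neighbors in $\beta\uparrow v^\downarrow=A'_v\cap C$.
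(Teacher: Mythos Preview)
Your overall strategy matches the paper's: apply Theorem~\ref{thm-struct}, use Lemma~\ref{lemma-remove} on each bag to carve out a planar core, collect the rest into $L$, and exhibit a restricted tree decomposition of $G[C]$. The $K_{a,b}$-free refinement via the pigeonhole bound $\binom{t-1}{a}(b-1)$ is also the same as the paper's.

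There is, however, a genuine gap in your tree-width bound for $G[L]$, caused by which apex vertices you remove. You put $A_v\setminus A'_v$ into $L$ at each bag. But $A_v\setminus A'_v$ is \emph{not} contained in $\beta\downarrow v$: since $\beta\uparrow v\subseteq A_v$, any vertex of $\beta\uparrow v\setminus A'_v$ is an inherited apex that you discard at $v$ even though it may be a surface vertex of the parent. Concretely, a vertex $y\in V(G_p)\setminus L_p$ can satisfy $y\in\beta\uparrow v\subseteq A_v$ with $y\notin A'_v$ at a child $v$ of $p$; you then declare $y\in L$. Hence $\beta(p)\cap L$ can contain unboundedly many surface vertices of $G_p$, and your claim that ``at each $v$ the $L$-part consists of at most $a_H$ apex vertices sitting on top of a graph of bounded tree-width'' fails. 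For instance, if $G_p$ is a large planar grid and each grid vertex $y$ appears in a child bag $v_y$ with $y\in A_{v_y}\setminus A'_{v_y}$ (perfectly consistent with all clauses of Theorem~\ref{thm-struct}), then your $L$ contains the whole grid and $G[L]$ has unbounded tree-width.

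The paper sidesteps this by removing $A_v\setminus\beta\uparrow v$ instead---\emph{all} new apex vertices, active or not---which forces $L_v\subseteq\beta\downarrow v$ and hence $L\cap\beta(v)\subseteq L_v\cup A_v$, making the tree-width bound immediate. The price is that now $\beta'\uparrow v=\beta\uparrow v\cap C$ may exceed $t-1$ (it can pick up surface vertices from the parent), so your chain construction no longer applies directly; the paper instead re-parents the tree by contracting ``skippable'' edges, using condition~(e) to ensure that after re-parenting only vertices of $A'_{f(w)}$ survive in the upper link.
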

\begin{proof}
Let $a_H$ be the constant from Theorem~\ref{thm-struct} for $H$, and let $g$ be the maximum Euler genus of a surface in that
$H$ cannot be embedded.  Let $c_H=26g+11a_H$ and $c_{H,a,b}=c_H+\binom{t-1}{a}(b-1)$.

Since $G$ is $H$-minor-free, we can in polynomial time find its rooted tree decomposition $(T,\beta)$, its torso expansion $G'$,
and for each $v\in V(T)$, find $A_v$, $A'_v$, $G_v$, $G_{v,1}$, \ldots, $G_{v,m}$, and $\Sigma_v$ as described
in Theorem~\ref{thm-struct}.  Let $L'_v$ be the set of vertices obtained by applying Lemma~\ref{lemma-remove}
to $G_v$, $G_{v,1}$, \ldots, $G_{v,m}$; i.e., $G_v-L'_v$ is planar and the graph
$G_v[L'_v]\cup G_{v,1}\cup\ldots\cup G_{v,m}$ has tree-width at most $26g+10a_H$.
When considering the case that $G$ does not contain $K_{a,b}$ as a subgraph, let $S_v$ be the set of vertices of $G_v$ that have at least $a$ neighbors
in $G$ belonging to $A_v$ (and thus to $A'_v$); otherwise, let $S_v=\emptyset$.
Since there are at most $\binom{t-1}{a}$ ways how to choose a set of $a$ neighbors in $A'_v$ and no $b$ vertices can have the
same set of $a$ neighbors in $A'_v$, we have $|S_v|\le \binom{t-1}{a}(b-1)$.
Let $L_v=(A_v\setminus \beta\uparrow v)\cup V(G_{v,1}\cup \ldots\cup G_{v,m})\cup L'_v\cup S_v$.

We define $L=\bigcup_{v\in V(T)} L_v$.  Note that $L\cap \beta(v)\subseteq L_v\cup (\beta\uparrow v)\subseteq L_v\cup A_v$.
Consequently, $G'[L\cap \beta(v)]$ is obtained from $G_v[L'_v]\cup G_{v,1}\cup\ldots\cup G_{v,m}$ by adding $S_v$ and some of the
vertices of $A_v$, and consequently $G'[L\cap \beta(v)]$ has tree-width at most $c_{H,a,b}$ when considering the case that $G$ does not contain $K_{a,b}$ as a subgraph
and at most $c_H$ otherwise.  The graph $G[L]$ is a clique-sum of the graphs
$G'[L\cap\beta(v)]$ for $v\in V(T)$, and thus the tree-width of $G[L]$ is also at most $c_{H,a,b}$ or $c_H$.

Let $C=V(G)\setminus L$, and consider the graph $G[C]$.  For $v\in V(T)$, let $\beta'(v)=\beta(v)\cap C$.
Then $(T,\beta')$ is a rooted tree decomposition of $G[C]$ such that for every $v\in V(T)$,
the graph $G'[\beta'\downarrow v]\subseteq G_v-L'_v$ is planar, all vertices of $\beta'\uparrow v$ adjacent in $G$
to a vertex of $\beta'\downarrow v$ belong to $A'_v$ (and thus there are at most $t-1$ such vertices), and
when considering the case that $G$ does not contain $K_{a,b}$ as a subgraph, each vertex of $\beta'\downarrow v$ has at most $a-1$
neighbors in $G$ belonging to $\beta'\uparrow v$.

Note that $\beta'\uparrow v$ can contain vertices not belonging
to $A'_v$, and thus $\beta'\uparrow v$ can have size larger than $t-1$, and
the tree decomposition $(T,\beta')$ is not necessarily $(t,t)$-restricted.
However, by the condition (e) from the statement of Theorem~\ref{thm-struct},
the vertices of $(\beta'\uparrow v)\setminus A'_v$ can only be contained in the
bags of descendants of $v$ which are disjoint from $V(G_v)$, and thus we can
fix up this issue as follows.

If $w$ is a child of $v$ and $\beta'(w)\cap V(G_v)=\emptyset$, we say that the edge $vw$ is \emph{skippable}; note that
in that case $\beta'\uparrow w\subseteq \beta'\uparrow v$.  For each vertex $w$ of $T$, let $f(w)$ be the nearest
ancestor of $w$ such that the first edge on the path from $f(w)$ to $w$ in $T$ is not skippable.
Let $T'$ be the rooted tree with vertex set $V(T)$ where the parent of each vertex $w$ is $f(w)$.  Observe that $(T',\beta')$
is a tree decomposition of $G[C]$.  Furthermore, denoting by $z$ the child of $f(w)$ on
the path from $f(w)$ to $w$ in $T$, note that if a vertex $x\in \beta'\uparrow f(w)$ is contained in $\beta'(w)$, then
$x\in \beta(z)$, and since the edge $f(w)z$ is not skippable, the condition (e) from the statement of Theorem~\ref{thm-struct}
implies that $x\in A'_{f(w)}$.

Hence, letting $\beta''(v)=(\beta'\downarrow v)\cup (A'_v\cap C)$
for each vertex $v$ of $T'$, we conclude that $(T',\beta'')$ is a rooted tree decomposition of $G[C]$ which
is $(t,t)$-restricted, and when considering the case that $G$ does not contain $K_{a,b}$ as a subgraph, the decomposition is $(t,a)$-restricted.
\end{proof}

Let us now consider the chromatic number of graphs with a $(t,a)$-restricted tree decomposition.

\begin{lemma}\label{lemma-color}
Let $a$ and $t$ be positive integers.
Let $G$ be a graph with a $(t,a)$-restricted rooted tree decomposition $(T,\beta)$.
The chromatic number of $G$ is at most $\min(t+3,a+4)$.
Additionally, if $G$ is triangle-free, then the chromatic number of $G$ is at most $\lceil (13t+172)/14\rceil$. 
\end{lemma}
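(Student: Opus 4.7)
The plan is to build a proper colouring of $G$ inductively, processing the rooted tree decomposition $(T,\beta)$ from the root downward and extending the partial colouring one bag at a time. At a vertex $v\in V(T)$, the tree decomposition property guarantees that every edge of $G$ with an endpoint in $\beta\downarrow v$ has its other endpoint in $\beta(v)$, so the only constraints on the extension come from the internal edges of $G[\beta\downarrow v]$ and from the $G$-edges between $\beta\downarrow v$ and the already-coloured set $\beta\uparrow v$.

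For the $t+3$ bound: since $|\beta\uparrow v|\le t-1$, at most $t-1$ of the $t+3$ colours appear on $\beta\uparrow v$, and hence a four-element set $R$ of colours not used on $\beta\uparrow v$ exists. The graph $G'[\beta\downarrow v]$ is planar, so the Four Colour Theorem supplies a proper colouring of it using only colours from $R$; this extends the current colouring because $R$ avoids the colours on $\beta\uparrow v$.

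For the $a+4$ bound: for each $x\in\beta\downarrow v$ assign the list $L(x)=\{1,\dots,a+4\}\setminus\{\varphi(y):y\in N_G(x)\cap\beta\uparrow v\}$, which by the $(t,a)$-restricted assumption has size at least $5$. Thomassen's $5$-choosability theorem for planar graphs, applied to $G'[\beta\downarrow v]$, then produces an $L$-colouring, which is the desired extension.

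For the triangle-free bound the plan is the same top-down extension, now exploiting that $G[\beta\downarrow v]$, as a subgraph of the triangle-free graph $G$ contained in the planar graph $G'[\beta\downarrow v]$, is itself triangle-free and planar. A direct execution---reserve three colours disjoint from those on $\beta\uparrow v$ and apply Gr\"otzsch's theorem to $3$-colour $G[\beta\downarrow v]$ with them---already yields the bound $t+2$, matching $\lceil(13t+172)/14\rceil$ for $t$ up to roughly $157$. To reach the sharper asymptotic bound for larger $t$, the plan is to replace the disjoint-palette strategy with a more economical list-colouring argument in which colours of $C(\beta\uparrow v)$ are reused on those $\beta\downarrow v$-vertices that avoid the corresponding $\beta\uparrow v$-neighbours. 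I expect this refinement to be the main obstacle: the leading factor $13/14$ suggests a recursion of the form ``every $14$ additional boundary vertices force at most $13$ additional colours,'' which will likely be established through a block-partition of $\beta\uparrow v$ combined with a tight list-colouring result for triangle-free planar graphs with pre-coloured boundaries.
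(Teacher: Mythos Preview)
Your arguments for the $t+3$ and $a+4$ bounds are correct and coincide with the paper's proof.

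For the triangle-free bound, your Gr\"otzsch-based extension giving $t+2$ colours is valid and does cover the lemma for $t\le 157$, but you have not proved the bound $\lceil(13t+172)/14\rceil$ in general: for $t\ge 158$ this quantity is strictly smaller than $t+2$, and your proposal explicitly leaves that range unfinished. Your speculative plan (a block-partition of $\beta\uparrow v$ combined with a list-colouring theorem for triangle-free planar graphs with precoloured boundary) does not correspond to the paper's method, and it is not clear how such an argument would produce the constant $13/14$.

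The paper's approach is rather different. It colours top-down with $c=\lceil(13t+172)/14\rceil$ colours while maintaining an invariant ($\star$): for every node $w$ and every independent set $I\subseteq\beta(w)$ of $G$ whose intersection with $\beta\downarrow w$ is a clique in the torso expansion $G'$, at most $c-6$ distinct colours appear on $I$. Inside $\beta\downarrow v$ the vertices are processed in a $5$-degeneracy order of the planar graph $G'[\beta\downarrow v]$; for the current vertex $x$ one looks at the set $Q_x$ consisting of its at most five already-coloured $G'$-neighbours in $\beta\downarrow v$ together with $\beta\uparrow v$, so $|Q_x|\le t+4$. A colour is \emph{forbidden} if some independent subset of $Q_x\setminus N_G(x)$ (meeting the clique condition) already carries $c-6$ other colours. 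Triangle-freeness makes $N_G(x)\cap\beta\uparrow v$ an independent set, so ($\star$) bounds its colour count by $c-6$; a counting argument then shows that if every candidate colour were forbidden, one could build a set $Z\subseteq Q_x$ together with an injection-up-to-multiplicity-$6$ into $Q_x\setminus Z$, forcing $|Q_x|\ge \tfrac{7}{6}(2c-t-21)$, which contradicts $|Q_x|\le t+4$ for the chosen $c$. This inequality is exactly what yields the coefficient $13/14$. The invariant ($\star$) and this counting step are the missing ingredients in your proposal.
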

\begin{proof}
We can color $G$ using $t+3$ colors, starting from the root of the tree decomposition, as follows.
Suppose that we are considering a vertex $v\in V(T)$ such that $\beta\uparrow v$ is already colored.
Since $|\beta\uparrow v|\le t-1$, this leaves at least $4$ other colors to be used on
$G[\beta\downarrow v]$.  Hence, we can extend the coloring to $G[\beta\downarrow v]$ by the Four Color Theorem.

We can also color $G$ using $a+4$ colors, starting from the root of the tree decomposition, as follows.
For each vertex $x$ of $\beta\downarrow v$, at most $a-1$ colors are used on its neighbors in
$\beta\uparrow v$, leaving $x$ with at least $5$ available colors not appearing on its neighbors.  Since
$G[\beta\downarrow v]$ is planar, we can color it from these lists of size at least $5$ using the result
of Thomassen~\cite{thomassen1994}, again extending the coloring to $G[\beta\downarrow v]$.

Finally, suppose that $G$ is triangle-free.  Let $G'$ be the torso expansion of $G$ with respect to $(T,\beta)$, and let $c=\lceil (13t+172)/14\rceil$. 
We again color $G$ starting from the root of the tree decomposition using at most
$c$ colors.  Additionally, we choose the coloring so that the following invariant is satisfied: ($\star$) for each vertex $w$ of $T$
and for each independent set $I$ in $G[\beta(w)]$ such that $I\cap \beta\downarrow w$ induces a clique in $G'$,
at most $c-6$ distinct colors are used on $I$.

Let $v$ be a vertex of $T$.  Suppose we have already colored
$\beta\uparrow v$, and we want to extend the coloring to $\beta\downarrow v$.
Note that the choice of this coloring may only affect the validity of the invariant ($\star$) at $v$ and
at descendants of $v$ in $T$.  Consider any descendant $w$ of $T$.  Coloring $\beta\downarrow v$
can only assign color to vertices of $\beta\uparrow w$, and since
$G'$ is the torso expansion of $G$, the set $\beta\uparrow w\cap \beta\downarrow v$ induces a clique in $G'$.
Consequently, the validity of ($\star$) at $v$ implies the validity at $w$ (until more vertices of $G$ are assigned colors),
and thus when choosing the coloring of $\beta\downarrow v$, we only need to ensure that ($\star$) holds at $v$.

The graph $G'[\beta\downarrow v]$ is planar, and thus it is $5$-degenerate; i.e., there exists
an ordering of its vertices such that each vertex is preceded by at most $5$ of its neighbors.
Let us color the vertices of $\beta\downarrow v$ according to this ordering, always preserving the validity
of ($\star$) at $v$.  Suppose that we are choosing a color for a vertex $x\in V(\beta\downarrow v)$.  Let $P_x$
consist of the neighbors of $x$ in $G'$ belonging to $\beta\downarrow v$ that precede it in the ordering;
we have $|P_x|\le 5$.  Note that all cliques in $G'[\beta\downarrow v]$ containing $x$ and with all other vertices already colored
are subsets of $P_x\cup \{x\}$.  Let $Q_x=P_x\cup \beta\uparrow v$; we have $|Q_x|\le t+4$.

Let $N_x$ consist of vertices of $Q_x$ that are adjacent to $x$ in $G$.
We say that a color $a$ is forbidden at $x$ if there exists an independent set
$A_a\subseteq Q_x\setminus N_x$ of $G$ such that $A_a\cap P_x$ is a clique in $G'$ and $c-6$ colors distinct from $a$ appear on $A_a$.
Observe that assigning $x$ a color which neither appears on $N_x$ nor is forbidden results
in a proper coloring that preserves the invariant ($\star$) at $v$.

Suppose first that no color is forbidden at $x$.  Since $G$ is triangle-free, $N_x\setminus P_x$ is an independent set in
$G[\beta\uparrow v]$, and by ($\star$), at most $c-6$ colors appear on $N_x\setminus P_x$.  Since $|P_x|\le 5$, it follows that
some color does not appear on $N_x$, as required.

Hence, we can assume that some color is forbidden at $x$, and thus there exists an independent
set $Z_1\subseteq Q_x\setminus N_x$ of size at least $c-6$ such that vertices of $Z_1$ are assigned pairwise distinct colors.
Since $|Q_x|\le t+4$, at most $t+4-(c-6)=t+10-c$ of these colors appear at least twice
on $Q_x$, and thus there exists a set $Z_2\subseteq Z_1$ of size at least $c-6-(t+10-c)=2c-t-16$
such that the color of each vertex of $Z_2$ appears exactly once on $Q_x$ (and thus does not appear on $N_x$).  Let $Z=Z_2\setminus P_x$;
we have $|Z|\ge 2c-t-21$.   We claim that not all colors appearing on $Z$ are forbidden; since
such colors do not appear on $N_x$, we can use them to color $x$.

For contradiction, assume that colors of all vertices of $Z$ are forbidden at $x$.  Let $Z=\{z_1,\ldots, z_m\}$
for some $m\ge 2c-t-21$, and for $a=1,\ldots,m$, let $a$ be the color of $z_a$.  Since $a$ is forbidden at $x$,
there exists an independent set $A_a\subseteq Q_x\setminus N_x$ such that $A_a\cap P_x$ is a clique in $G'$ and $c-6$ colors
distinct from $a$ appear on $A_a$.  Note that $A_a\cup \{z_a\}$ is not an independent set, as otherwise
this set contradicts the invariant ($\star$) at $v$.  Hence, we can choose a neighbor $f(a)$ of $z_a$ in $A_a$.
Since $Z$ is an independent set, we have $f(a)\not\in Z$.  Furthermore, we claim that the preimage in $f$ of each vertex
has size at most $6$: if say $f(z_1)=\ldots=f(z_7)=y$, then for $i=1,\ldots,7$, the vertex $z_i$ would have a neighbor $y$ in the independent set
$A_1$, and thus $z_1,\ldots,z_7\not\in A_1$; however, the only appearance of colors $1$, \ldots, $7$ in $Q_x$ is on the vertices
$z_1$, \ldots, $z_7$, and thus at most $c-7$ colors would appear on $A_1$.  We conclude that $|f(Z)|\ge |Z|/6$,
and thus $t+4\ge |Q_x|\ge |Z|+|f(Z)|\ge \tfrac{7}{6}|Z|\ge \tfrac{7}{6}(2c-t-21)\ge (6t+25)/6$.  This is a contradiction.
\end{proof}

Combining Lemma~\ref{lemma-color} with Theorem~\ref{thm-decomp} (coloring $G[L]$ using at most $\chi(G)$ colors in linear time~\cite{coltweasy},
using the fact that $G[L]$ has bounded tree-width, and coloring $G[C]$ using a disjoint set of colors),
we obtain Theorems~\ref{thm-alg}, \ref{thm-algnotri}, and \ref{thm-algnobip}.

\bibliographystyle{siam}
\bibliography{nonapprox}

\begin{thebibliography}{10}

\bibitem{coltweasy}
{\sc S.~Arnborg and A.~Proskurowski}, {\em Linear time algorithms for {NP}-hard
  problems restricted to partial {$k$}-trees}, Discrete Applied Mathematics, 23
  (1989), pp.~11--24.

\bibitem{demaine2009approximation}
{\sc E.~Demaine, M.~Hajiaghayi, and K.-i. Kawarabayashi}, {\em Approximation
  algorithms via structural results for apex-minor-free graphs}, Automata,
  Languages and Programming,  (2009), pp.~316--327.

\bibitem{demaine2005algorithmic}
{\sc E.~D. Demaine, M.~T. Hajiaghayi, and K.-i. Kawarabayashi}, {\em
  Algorithmic graph minor theory: {D}ecomposition, approximation, and
  coloring}, in 46th Annual IEEE Symposium on Foundations of Computer Science
  (FOCS'05), IEEE, 2005, pp.~637--646.

\bibitem{devospart}
{\sc M.~DeVos, G.~Ding, B.~Oporowski, D.~Sanders, B.~Reed, P.~Seymour, and
  D.~Vertigan}, {\em Excluding any graph as a minor allows a low tree-width
  2-coloring}, J. Comb. Theory, Ser. B, 91 (2004), pp.~25--41.

\bibitem{apex}
{\sc Z.~Dvo{\v{r}}{\'a}k and R.~Thomas}, {\em List-coloring apex-minor-free
  graphs}, arXiv e-prints, 1401.1399 (2014).

\bibitem{trfree7}
{\sc Z.~Dvo\v{r}\'ak, D.~Kr\'al', and R.~Thomas}, {\em Three-coloring
  triangle-free graphs on surfaces {VII}. {A} linear-time algorithm}, ArXiv
  e-prints, 1601.01197 (2016).

\bibitem{garey1979computers}
{\sc M.~Garey and D.~Johnson}, {\em {Computers and Intractability: A Guide to
  the Theory of NP-completeness}}, WH Freeman \& Co. New York, NY, USA, 1979.

\bibitem{grotzsch1959}
{\sc H.~Gr{\"o}tzsch}, {\em Ein {D}reifarbensatz f\"{u}r {D}reikreisfreie
  {N}etze auf der {K}ugel}, Math.-Natur. Reihe, 8 (1959), pp.~109--120.

\bibitem{decihad}
{\sc K.~Kawarabayashi and B.~A. Reed}, {\em Hadwiger's conjecture is
  decidable}, in Proceedings of the 41st Annual {ACM} Symposium on Theory of
  Computing, {STOC} 2009, Bethesda, MD, USA, May 31 - June 2, 2009, {ACM},
  2009, pp.~445--454.

\bibitem{kawarabayashi2009additive}
{\sc K.-i. Kawarabayashi, E.~D. Demaine, and M.~Hajiaghayi}, {\em Additive
  approximation algorithms for list-coloring minor-closed class of graphs}, in
  Proceedings of the twentieth Annual ACM-SIAM Symposium on Discrete
  Algorithms, SIAM, 2009, pp.~1166--1175.

\bibitem{kostomindeg}
{\sc A.~Kostochka}, {\em {Lower bound of the {H}adwiger number of graphs by
  their average degree}}, Combinatorica, 4 (1984), pp.~307--316.

\bibitem{molloy2013graph}
{\sc M.~Molloy and B.~Reed}, {\em Graph colouring and the probabilistic
  method}, vol.~23, Springer Science \& Business Media, 2013.

\bibitem{RSey}
{\sc N.~Robertson and P.~D. Seymour}, {\em Graph {M}inors. {V}. {E}xcluding a
  planar graph}, J. Combin. Theory, Ser.~B, 41 (1986), pp.~92--114.

\bibitem{robertson2003graph}
\leavevmode\vrule height 2pt depth -1.6pt width 23pt, {\em {Graph {M}inors.
  {XVI}. {E}xcluding a non-planar graph}}, J. Combin. Theory, Ser.~B, 89
  (2003), pp.~43--76.

\bibitem{robinconj}
{\sc R.~Thomas}.
\newblock Conference Graph Theory 2008 at Sandbjerg Manor; slides at
  \url{http://people.math.gatech.edu/~thomas/SLIDE/beyondgrot.pdf}.

\bibitem{Thomassen93}
{\sc C.~Thomassen}, {\em Five-{C}oloring {M}aps on {S}urfaces}, J. of Combin.
  Theory, Ser.~B, 59 (1993), pp.~89 -- 105.

\bibitem{thomassen1994}
\leavevmode\vrule height 2pt depth -1.6pt width 23pt, {\em Every planar graph
  is 5-choosable}, J. Combin. Theory, Ser.~B, 62 (1994), pp.~180--181.

\bibitem{colnonap}
{\sc D.~Zuckerman}, {\em Linear degree extractors and the inapproximability of
  {M}ax {C}lique and {C}hromatic {N}umber}, Theory of Computing, 3 (2007),
  pp.~103--128.

\end{thebibliography}

\section*{Appendix}

We now present the proof of Lemma~\ref{lemma-constr}.
We use Talagrand's inequality in the following form (see~\cite[Talagrand's Inequality II]{molloy2013graph}).

\begin{theorem}\label{thm-talagrand}
Let $X$ be a non-negative random variable,
not identically $0$, which is determined by $n$ independent trials $T_1$, \ldots , $T_n$, and
satisfying the following for some $c, r > 0$:
\begin{enumerate}
\item Changing the outcome of any one trial can affect $X$ by at most $c$.
\item For any $s$, if $X\ge s$, then there is a set of at most $rs$ trials whose outcomes
certify that $X\ge s$.
\end{enumerate}
Then for any non-negative $t\le \text{E}[X]$,
$$\text{Pr}\bigl[|X-\text{E}[X]|>t+60c\sqrt{r\text{E}[X]}\bigr]\le 4e^{-\frac{t^2}{8c^2r\text{E}[X]}}.$$
\end{theorem}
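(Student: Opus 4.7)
The plan is to establish Theorem~\ref{thm-talagrand} by reducing it to Talagrand's \emph{convex distance inequality}, which is the standard route followed in Molloy and Reed's book. Let $\Omega=\Omega_1\times\cdots\times\Omega_n$ be the product sample space of the trials $T_1,\ldots,T_n$, and for $A\subseteq\Omega$ and $x\in\Omega$ define
$$d_T(A,x)=\sup_{\alpha\in\mathbb{R}^n_{\ge 0},\ \|\alpha\|_2=1}\ \inf_{y\in A}\ \sum_{i:\,x_i\ne y_i}\alpha_i.$$
The key analytic input is the convex distance inequality
$$\text{Pr}[A]\cdot \text{E}\bigl[e^{d_T(A,X)^2/4}\bigr]\le 1,$$
which via Markov's inequality implies $\text{Pr}[d_T(A,X)\ge u]\le \text{Pr}[A]^{-1}e^{-u^2/4}$ for every $u\ge 0$.

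The first step, and the hardest part of the argument, is to prove this inequality by induction on $n$. One conditions on the outcome of the last coordinate, applies the inductive hypothesis separately to the slice of $A$ at value $\omega_n$ and to the projection of $A$ onto the first $n-1$ coordinates, and then optimizes over the weight $\alpha_n$ assigned to the new coordinate. The analytic core reduces to verifying a one-variable inequality of the form $\min(re^s,1)(1-r+re^s)\le e^{s^2/4}$ for $r\in[0,1]$ and $s\ge 0$, and closing the induction crucially requires choosing the exponential moment of $d_T(A,\cdot)^2/4$, rather than a naive tail bound, as the quantity being controlled, so that the product structure of the expectation interacts cleanly with the conditioning step.

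The second step uses the Lipschitz hypothesis (1) and the certifiability hypothesis (2) to translate convex-distance concentration into concentration for $X$. Let $M$ be a median of $X$ and set $A=\{X\le M\}$. Given $\omega$ with $X(\omega)=s\ge M$, hypothesis~(2) yields an index set $I\subseteq\{1,\ldots,n\}$ with $|I|\le rs$ whose trial outcomes certify $X(\omega)\ge s$. Taking $\alpha$ supported uniformly on $I$ with $\|\alpha\|_2=1$, any $y\in A$ must disagree with $\omega$ on at least $(s-M)/c$ indices of $I$ by hypothesis~(1); otherwise the outcomes on $I$ together with the coordinates on which $y$ agrees with $\omega$ would force $X(y)>M$. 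Hence $d_T(A,\omega)\ge (s-M)/(c\sqrt{rs})$, and the convex distance inequality (applied with $\text{Pr}[A]\ge 1/2$) yields
$$\text{Pr}[X\ge M+t]\le 2\exp\!\Bigl(-\tfrac{t^2}{4c^2r(M+t)}\Bigr),$$
with a symmetric argument for the lower tail via $A=\{X\ge M\}$. The last step is the routine median-to-mean conversion: integrating these tail bounds shows $|M-\text{E}[X]|\le 30c\sqrt{r\text{E}[X]}$ (up to absolute constants absorbed into the final $60$), after which replacing $M$ by $\text{E}[X]$ and enlarging the deviation window by $60c\sqrt{r\text{E}[X]}$ gives the stated form, with the factor $4$ coming from handling both tails and the median/mean shift.
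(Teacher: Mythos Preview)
The paper does not prove Theorem~\ref{thm-talagrand} at all: it is quoted verbatim from Molloy and Reed's textbook (the citation \cite[Talagrand's Inequality II]{molloy2013graph} immediately precedes the statement), and is used as a black-box tool in the proof of Lemma~\ref{lemma-single}. There is therefore no ``paper's own proof'' to compare your attempt against.

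That said, your sketch is a faithful outline of the standard argument found in the cited reference: derive Talagrand's convex-distance inequality by induction on the number of coordinates, use the Lipschitz and certifiability hypotheses to convert convex-distance concentration into two-sided concentration around a median $M$, and then pass from median to mean by integrating the tail bounds. The constants you quote ($60$, $4$, $8$) match those in Molloy--Reed, and the logical flow is correct. If you were to turn this into a self-contained proof you would need to fill in the median-to-mean step more carefully (showing $|M-\text{E}[X]|$ is controlled by $c\sqrt{r\text{E}[X]}$ with the right constant requires a short computation, not just a remark), and the certificate argument in the second step deserves one more sentence explaining why disagreements \emph{outside} $I$ do not interfere with the lower bound guaranteed by the certificate. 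But as a proof plan this is sound; it simply reproduces material the paper chose to cite rather than reprove.
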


We also use Chernoff's inequality (see~\cite[Chernoff Bound]{molloy2013graph}).
\begin{theorem}\label{thm-chernoff}
Let $X_1$, \ldots, $X_n$ be independent random variables, each taking value $1$ with probability $p$ and value $0$ otherwise.
For any non-negative $t\le np$,
$$\text{Pr}[|X-np|>t]<2e^{-\frac{t^2}{3np}}.$$
\end{theorem}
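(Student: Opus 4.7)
The plan is to prove the two-sided tail bound by treating the upper tail $\Pr[X - np > t]$ and the lower tail $\Pr[np - X > t]$ separately and then combining them by a union bound, which accounts for the factor of $2$ on the right-hand side. Each one-sided bound will be established by the standard Chernoff (exponential moment) method: for any $\lambda > 0$,
$$\Pr[X \ge np + t] = \Pr\bigl[e^{\lambda X} \ge e^{\lambda(np+t)}\bigr] \le e^{-\lambda(np+t)}\,\text{E}\bigl[e^{\lambda X}\bigr]$$
by Markov's inequality, and symmetrically for the lower tail using $-\lambda$.

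First I would compute the moment generating function. Since the $X_i$ are independent Bernoulli($p$), independence gives
$$\text{E}\bigl[e^{\lambda X}\bigr] = \prod_{i=1}^n \text{E}\bigl[e^{\lambda X_i}\bigr] = \bigl(1 - p + p e^{\lambda}\bigr)^n \le \exp\bigl(np(e^\lambda - 1)\bigr),$$
where the last inequality uses $1 + x \le e^x$. Substituting back and optimizing $\lambda$ over positive reals (setting the derivative to zero yields $e^\lambda = 1 + t/(np)$), the upper tail becomes
$$\Pr[X \ge np+t] \le \exp\bigl(t - (np+t)\ln(1 + t/(np))\bigr).$$
An analogous computation with $\lambda < 0$ gives
$$\Pr[X \le np - t] \le \exp\bigl(-t - (np-t)\ln(1 - t/(np))\bigr).$$

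The main technical step, and the only place where the constraint $t \le np$ and the constant $3$ come in, is to show that both exponents are bounded above by $-t^2/(3np)$. Setting $\delta = t/(np) \in [0,1]$, this amounts to the two calculus inequalities
$$\delta - (1+\delta)\ln(1+\delta) \le -\delta^2/3 \quad\text{and}\quad -\delta - (1-\delta)\ln(1-\delta) \le -\delta^2/3$$
for $\delta \in [0,1]$. Each can be verified by expanding the logarithm in a Taylor series (giving leading term $-\delta^2/2$) and checking that the error terms are controlled on $[0,1]$; equivalently, define $\varphi(\delta)$ as the difference of the two sides, compute $\varphi(0) = \varphi'(0) = 0$, and verify $\varphi''(\delta) \le 0$ throughout the interval.

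The main obstacle is precisely pinning down these single-variable inequalities with the explicit constant $3$ (rather than a weaker constant like $2$ that would emerge from a naive Taylor bound without using $\delta \le 1$); once they are in hand, adding the two one-sided bounds yields the stated estimate with the factor of $2$, and the strict inequality follows from the strictness in the Taylor comparison for $t > 0$ (the case $t = 0$ being trivial).
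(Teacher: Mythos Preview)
The paper does not prove this statement; it is quoted verbatim as the Chernoff Bound from Molloy and Reed's book and used as a black box. Your proposal is the standard exponential-moment proof and is essentially correct in outline. One small inaccuracy: the suggestion to verify $\varphi''(\delta)\le 0$ throughout $[0,1]$ works for the lower-tail inequality (where indeed $-1/(1-\delta)+2/3<0$), but fails for the upper-tail inequality, since there $\varphi''(\delta)=-1/(1+\delta)+2/3$ changes sign at $\delta=1/2$. For the upper tail you instead observe that $\varphi'(0)=0$, $\varphi'$ first decreases and then increases, and $\varphi'(1)=-\ln 2+2/3<0$, so $\varphi'\le 0$ on $[0,1]$ and hence $\varphi\le 0$. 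With that correction the argument goes through.
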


Let $m\ge 2$ be an integer.  Let $V$ be the disjoint union of $m$ sets $V_1$, \ldots, $V_m$ of vertices, each of size $n=\lceil 10^{13}m^3\log^2 m\rceil$.
Let $p=1/\sqrt{6(nm-2)}$.  Let $G_0$ be a random graph with vertex set $V$ in that each pair $uv$ forms an edge independently
with probability $p$.  We say that a set of edges of $G_0$ is \emph{triangular} if it is the union of edge sets
of pairwise edge-disjoint triangles of $G_0$, and a set is \emph{subtriangular} if it is a subset of a triangular set.
A triangle in $G_0$ is \emph{isolated} if it is edge-disjoint from all other triangles in $G_0$.
Let $G$ be a graph obtained from $G_0$ by removing edges of an inclusionwise-maximal triangular set as well all edges
with both ends in the same set $V_i$ for $i=1,\ldots, m$.  Clearly, $G$ is triangle-free and $\chi(G)\le m$.

\begin{lemma}\label{lemma-single}
Let $a$ be an integer such that $1\le a\le (1-1/m)n$.
Suppose $A$ is a subset of $V$ of size $n$ such that $|A\setminus V_1|=a$.
Let $S$ be the set of pairs with one end in $A\cap V_1$ and the other end in $A\setminus V_1$.
Then the probability that $S\cap E(G)=\emptyset$ is less than $6e^{-\tfrac{pn}{2592m}a}$.
\end{lemma}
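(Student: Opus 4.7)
The plan is to exhibit, with the stated probability, an edge of $S$ lying in $E(G_0)$ that is contained in no triangle of $G_0$; such an edge cannot belong to any triangular set, so it survives the removal of the inclusionwise-maximal triangular set used to form $G$, and therefore lies in $S\cap E(G)$. Since $a\le(1-1/m)n$ gives $n-a\ge n/m$, we have $|S|=a(n-a)\ge an/m$.

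To count $S$-edges present in $G_0$, I would set $X:=|S\cap E(G_0)|$, a sum of $|S|$ independent $\text{Bernoulli}(p)$ indicators with $\text{E}[X]=p|S|\ge pan/m$. Chernoff's inequality (Theorem~\ref{thm-chernoff}) then yields $\text{Pr}[X<p|S|/2]\le 2e^{-p|S|/12}$. To control the $S$-edges lying in triangles, let $Y$ denote the number of $e\in S\cap E(G_0)$ contained in some triangle of $G_0$. Since $(nm-2)p^2=1/6$ by the choice of $p$, a first-moment computation gives
$$\text{E}[Y]\le|S|\cdot p\cdot(nm-2)p^2=p|S|/6.$$

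The main obstacle is establishing a sharp upper-tail concentration for $Y$. Flipping the edge-trial at $xy$ changes $Y$ by at most $1+2|C_{G_0}(x,y)|$, where $C_{G_0}(x,y)$ is the common neighborhood of $x$ and $y$ in $G_0$, whose size is a priori unbounded. To circumvent this, I would first note that each codegree is distributed as $\text{Binomial}(nm-2,p^2)$ with mean $1/6$, so Chernoff combined with a union bound over the $\binom{nm}{2}$ vertex pairs guarantees that the maximum codegree of $G_0$ is at most some $L=O(\log(nm))$ except on a ``bad'' event $\bar{\mathcal{E}}$ whose probability is polynomially small in $nm$. On $\mathcal{E}$, the Lipschitz constant of $Y$ with respect to a single edge-trial is $O(L)$, and $Y\ge s$ is certified by $3s$ trials (three edges per witnessing triangle), so Talagrand's inequality (Theorem~\ref{thm-talagrand}), in an exceptional-outcomes variant that tolerates $\bar{\mathcal{E}}$, delivers an exponentially small upper-tail bound for $Y$.

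Putting the pieces together, outside a union of bad events whose probabilities total at most $6e^{-\Omega(pan/m)}$, we have $X-Y\ge p|S|/6\ge 1$, so some edge of $S\cap E(G_0)$ is triangle-free in $G_0$ and therefore lies in $S\cap E(G)$. Tracking the constants that appear in the Chernoff and Talagrand estimates yields the claimed bound $6e^{-pna/(2592m)}$.
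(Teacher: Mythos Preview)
Your overall strategy matches the paper's: bound $X=|S\cap E(G_0)|$ below by Chernoff and bound the number of removable $S$-edges above by Talagrand, then compare. The substantive difference is the second random variable. You take $Y$, the number of $S$-edges lying in \emph{some} triangle of $G_0$, and correctly observe that flipping the trial at $xy$ can change $Y$ by as much as $1+2|C_{G_0}(x,y)|$, which is unbounded; you then propose to condition on all codegrees being $L=O(\log(nm))$ and to invoke an exceptional-outcomes variant of Talagrand (a tool not provided in the paper). The paper instead lets $X_2$ be the maximum size of a \emph{subtriangular} subset of $S\cap E(G_0)$, i.e.\ a subset contained in a union of pairwise edge-disjoint triangles of $G_0$. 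Because the triangles in any such packing are edge-disjoint, flipping one edge-trial creates or destroys at most one triangle of the packing, so $X_2$ changes by at most $3$ unconditionally; and $X_2\ge s$ is certified by the at most $3s$ edges of $s$ witnessing triangles. Hence Theorem~\ref{thm-talagrand} applies directly with $c=r=3$, with no conditioning and no auxiliary form of the inequality. The two-sided estimate $p|S|/12<\text{E}[X_2]\le p|S|/6$ needed to set $t=\text{E}[X_2]$ is obtained by sandwiching $X_2$ between the count of $S$-edges in isolated triangles and the count of $S$-edges in arbitrary triangles.

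This difference is not merely cosmetic for the lemma as stated. With your Lipschitz constant $c=\Theta(L)$, the Talagrand exponent is $-t^2/(8c^2r\,\text{E}[Y])=-\Theta\bigl(p|S|/L^2\bigr)$ rather than $-\Theta(p|S|)$, so the bound your argument actually yields is of order $e^{-\Omega(pna/(m\log^2(nm)))}$. Your final sentence, that tracking constants recovers $6e^{-pna/(2592m)}$, is therefore incorrect: the $\log^2(nm)$ loss coming from $c^2$ cannot be absorbed into an absolute constant. (A lemma with this extra logarithmic loss could still be fed into Lemma~\ref{lemma-exin} after enlarging $n$, but it is not the statement you are asked to prove.) You also never establish a lower bound on $\text{E}[Y]$, which you need in order to take $t=\Theta(\text{E}[Y])$ in Talagrand; the paper gets this from the isolated-triangle count. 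The clean fix for all of this is precisely the paper's: replace $Y$ by the maximal subtriangular count $X_2$.
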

\begin{proof}
Let $X_1=|S\cap E(G_0)|$ and let $X_2$ be the maximum size of a subtriangular (in $G_0$) subset of $S$.
If $X_1>X_2$, then not all edges of $S\cap E(G_0)$ can be contained in a triangular set in $G_0$, and
thus $\text{Pr}[S\cap E(G)=\emptyset]\le \text{Pr}[X_1\le X_2]$.
We have $\text{E}[X_1]=p|S|$, and by Theorem~\ref{thm-chernoff}, we have $\text{Pr}[X_1<p|S|/2]<2e^{-p|S|/12}$.

Let $X'_2$ be the number of edges of $S\cap E(G_0)$ that belong to triangles in $G_0$, and let $X''_2$
be the number of the edges of $S\cap E(G_0)$ that belong to isolated triangles.  Note that $X''_2\le X_2\le X'_2$.
Consider a pair $e\in S$.  Conditionally under the assumption that $e$ is an edge of $G_0$, the probability
that $e$ belongs to a triangle is at most $p^2(nm-2)=1/6$, and the probability that $e$ belongs to an isolated triangle
is $(nm-2)p^2((1-p)^3+3p(1-p)^2)^{nm-3}\ge (nm-2)p^2(1-3p^2)^{nm-3}\ge (nm-2)p^2(1-3p^2(nm-3))>1/12$.
Hence, $p|S|/12<\text{E}[X''_2]\le \text{E}[X_2]\le \text{E}[X'_2]\le p|S|/6$.

Note that for any pair $uv$, whether $uv$ is an edge of $G_0$ or not changes $X_2$ by at most $3$, and that
$|X_2|\ge s$ is certified by a set of at most $3s$ edges of $G_0$.  Hence, we can apply Theorem~\ref{thm-talagrand} for $X_2$
with $c=r=3$.  Furthermore, $\text{E}[X_2]\ge p|S|/12\ge pn/12>97200$, and thus $\text{E}[X_2]>180\sqrt{3\text{E}[X_2]}$.
Consequently,
$\text{Pr}[X_2\ge p|S|/2]\le \text{Pr}[X_2\ge 3\text{E}[X_2]]\le \text{Pr}[|X_2-\text{E}[X_2]|>\text{E}[X_2]+180\sqrt{3\text{E}[X_2]}]\le 4e^{-\text{E}[X_2]/216}\le 4e^{-p|S|/2592}$.
It follows that $\text{Pr}[X_1\le X_2]\le \text{Pr}[X_1<p|S|/2]+\text{Pr}[X_2\ge p|S|/2]\le 2e^{-p|S|/12}+4e^{-p|S|/2592}<6e^{-p|S|/2592}$.
Since $a\le (1-1/m)n$, we have $|S|=a(n-a)\ge na/m$.
Consequently, $\text{Pr}[S\cap E(G)=\emptyset]\le \text{Pr}[X_1\le X_2]<6e^{-p|S|/2592}\le 6e^{-\tfrac{pn}{2592m}a}$.
\end{proof}

\begin{lemma}\label{lemma-exin}
The probability that $G$ contains an independent set of size $n$ distinct from $V_1$, \ldots, $V_m$ is
less than $1$.
\end{lemma}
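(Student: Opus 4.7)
The plan is to control the bad event by a union bound over candidate independent sets $A \subseteq V$ with $|A| = n$ and $A \ne V_i$ for every $i$, applying Lemma~\ref{lemma-single} to each, and then verifying that the lavish choice $n = \lceil 10^{13} m^3 \log^2 m \rceil$ makes the resulting sum strictly less than $1$.

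First, I will note that if $|A| = n$ and $A$ is not any $V_i$, then $|A \setminus V_i| \ge 1$ for every $i$ (since $A \subseteq V_i$ together with $|V_i| = n$ would force $A = V_i$), and by choosing $i$ to maximize $|A \cap V_i|$ I may also guarantee $|A \cap V_i| \ge n/m$, hence $1 \le |A \setminus V_i| \le (1 - 1/m)n$. If $A$ is independent in $G$, then in particular the set $S$ of pairs with one end in $A \cap V_i$ and the other in $A \setminus V_i$ contains no edge of $G$. Because the sets $V_1, \ldots, V_m$ play symmetric roles in the construction of $G_0$ and $G$, Lemma~\ref{lemma-single} applies with $V_1$ replaced by $V_i$, yielding $\text{Pr}[A \text{ is independent in } G] < 6 e^{-\frac{pn}{2592m} a}$, where $a = |A \setminus V_i|$.

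Next, I will perform the union bound. For fixed $i$ and $a \in \{1, \ldots, \lfloor(1-1/m)n\rfloor\}$, the number of sets $A \subseteq V$ with $|A| = n$ and $|A \setminus V_i| = a$ is $\binom{n}{a}\binom{(m-1)n}{a} \le (en/a)^a (emn/a)^a = (e^2 m n^2 / a^2)^a$. Summing over $i$ and $a$, the probability that some such bad $A$ is independent in $G$ is bounded by
$$6m \sum_{a \ge 1} \frac{(e^2 m n^2)^a}{a^{2a}\, e^{Ca}} \le 6m \sum_{a \ge 1} \beta^a,$$
where $C = pn/(2592m)$ and $\beta = e^2 m n^2 / e^C$. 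Whenever $\beta < 1/2$, the geometric sum is at most $2\beta$, so the entire bound reduces to $12 m \beta$.

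Finally, I will verify $12 m \beta < 1$ from the parameter choice. Since $p = 1/\sqrt{6(nm - 2)}$ we have $pn = \Theta(\sqrt{n/m})$, and plugging in $n \ge 10^{13} m^3 \log^2 m$ gives $pn \ge c_1 m \log m$ for a large absolute constant $c_1$, so $C \ge c_2 \log m$ with $c_2$ arbitrarily large. In contrast, $\log(e^2 m n^2) = O(\log m)$ with a modest implied constant, hence $\log \beta$ is a very large negative multiple of $\log m$, making $12 m \beta$ minuscule for every $m \ge 2$. The one delicate step will be the numerical verification that $10^{13}$ is large enough to overwhelm all the constants emitted by Stirling's inequality, by Lemma~\ref{lemma-single}, and by the extra factor $6m$; given the huge slack built into $n$, this is more of a bookkeeping exercise than a real obstacle.
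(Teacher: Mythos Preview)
Your proposal is correct and follows essentially the same approach as the paper: pick the index $i$ with $|A\cap V_i|\ge n/m$, invoke Lemma~\ref{lemma-single}, and union-bound over all candidate sets $A$ stratified by $a=|A\setminus V_i|$, then check that the resulting geometric series is below $1$. The only cosmetic differences are that the paper uses the cruder count $(n^2m)^a$ in place of your binomial estimate and reduces by symmetry to showing $p_1<1/m$ rather than summing over $i$; the paper also carries out the numerical verification explicitly (showing $e^{-pn/(2592m)}n^2m<1/(12m)$), which matches your asymptotic sketch.
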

\begin{proof}
Each set of size $n$ intersects one of $V_1$, \ldots, $V_m$ in at least $n/m$ vertices.
Hence, it suffices to prove that for $i=1,\ldots,m$, the probability $p_i$ that $G$ contains an independent
set of size $n$ distinct from $V_i$ that intersects $V_i$ in at least $n/m$ vertices is less than $1/m$.
By symmetry, it suffices to show that $p_1<1/m$.

Consider an integer $a$ such that $1\le a\le (1-1/m)n$.  There are less than $(n^2m)^a$ sets $A\subseteq V$
of size $n$ such that $|A\setminus V_1|=a$ (one needs to select $a$ vertices of $|A\setminus V_1|$ from $V\setminus V_1$
and $a$ vertices of $V_1\setminus A$ from $V_1$).  By Lemma~\ref{lemma-single}, each such set has probability less than $6e^{-\tfrac{pn}{2592m}a}$
of being independent in $G$.  Note that $e^{-\tfrac{pn}{2592m}}n^2m<\tfrac{1}{12m}$.
Hence, $p_1<\sum_{a\ge 1} 6e^{-\tfrac{pn}{2592m}a}(n^2m)^a\le 6\sum_{a\ge 1} \bigl(\tfrac{1}{12m}\bigr)^a\le 1/m$.
\end{proof}

Lemma~\ref{lemma-exin} implies that with non-zero probability, $G$ has no large independent sets, giving a lower bound on
its chromatic number.

\begin{proof}[Proof of Lemma~\ref{lemma-constr}]
For $m=1$, the claim is trivial, hence assume that $m\ge 2$.
As we observed before, the graph $G$ constructed in this appendix is triangle-free and $\chi(G)\le m$.
On the other hand, Lemma~\ref{lemma-exin} implies that with non-zero probability, $G$ has no independent set of size $n+1$,
and thus $\chi(G)\ge \chi_f(G)\ge |V(G)|/\alpha(G)=m$.  Hence, we can set $H_m\colonequals G$.
\end{proof}

\end{document}